\newtheorem{Theorem*}{Theorem}
\newtheorem{Claim*}[Theorem]{Claim}
\newtheorem{CounterExample*}{$\overline{\hbox{\bf Example}}$}
\newtheorem{Example*}[Theorem]{Example}
\newtheorem{Intuition*}[Theorem]{Intuition}
\newtheorem{Joke*}[Theorem]{Joke}
\newtheorem{Lemma*}[Theorem]{Lemma}
\newtheorem{Open problem}[Theorem]{Open problem}
\newtheorem{Question*}[Theorem]{Question}
\newtheorem{theorem}{Theorem}
\newtheorem*{corollary*}{Corollary}
\newtheorem{lemma}[theorem]{Lemma}
\newtheorem*{lemma*}{Lemma}
\def \bSubexa    {\begin{subexa}}
\newcommand{\ignore}[1]{}
\newcommand{\EE}{\mathbb{E}}
\newcommand{\NN}{\mathbb{N}}
\newcommand{\RR}{\mathbb{R}}
\def \cA     {{\cal A}}
\def \cE     {{\cal E}}
\def \cI     {{\cal I}}
\def \cX     {{\cal X}}
\def \Paren#1{{\left({#1}\right)}}
\def\ignore#1{}
\newcommand{\bi}{\begin{itemize}}
\newcommand{\ei}{\end{itemize}}
\def\orpro{\mathop{\mathchoice
   {\vee\kern-.49em\raise.7ex\hbox{$\cdot$}\kern.4em}
   {\vee\kern-.45em\raise.63ex\hbox{$\cdot$}\kern.2em}
   {\vee\kern-.4em\raise.3ex\hbox{$\cdot$}\kern.1em}
   {\vee\kern-.35em\raise2.2ex\hbox{$\cdot$}\kern.1em}}\limits}
\def\andpro{\mathop{\mathchoice
 {\wedge\kern-.46em\lower.69ex\hbox{$\cdot$}\kern.3em}
 {\wedge\kern-.46em\lower.58ex\hbox{$\cdot$}\kern.25em}
 {\wedge\kern-.38em\lower.5ex\hbox{$\cdot$}\kern.1em}
 {\wedge\kern-.3em\lower.5ex\hbox{$\cdot$}\kern.1em}}\limits}
\def\simge{\mathrel{%
   \rlap{\raise 0.511ex \hbox{$>$}}{\lower 0.511ex \hbox{$\sim$}}}}
\def\simle{\mathrel{
   \rlap{\raise 0.511ex \hbox{$<$}}{\lower 0.511ex \hbox{$\sim$}}}}
\providecommand{\email}[1]{\href{mailto:#1}{\nolinkurl{#1}\xspace}}
\newcommand{\probof}[1]{\Pr\Paren{#1}}
\newcommand{\expectation}[1]{\EE\left[#1\right]}
\newcommand{\abs}[1]{\left\lvert #1 \right\rvert}
\newcommand{\dP}{p}
\newcommand{\smb}{x}
\newcommand{\Xon}{X^\ns}
\theoremstyle{plain}
\newcommand{\Ent}[1]{H\left(#1\right)}
\newcommand{\Expc}[1]{\ensuremath{\mathbb{E}\left[#1\right]}}
\newcommand{\range}[2][]{\ifthenelse{\equal{#1}{}}{\left[#2\right]}{\left[#1,#2 \right]}}
\newcommand{\indctr}[1]{\mathbbm{1}\left\lbrace #1 \right \rbrace}
\def\withcolors{0}
\def\withnotes{0}
  \newcommand{\new}[1]{{\color{red} {#1}}} 
  \newcommand{\newer}[1]{{\color{blue} {#1}}} 
  \newcommand{\acolor}[1]{{\color{purple}#1}} 
  \newcommand{\bcolor}[1]{{\color{RubineRed}#1}} 
  \newcommand{\scolor}[1]{{\color{orange}#1}} 
  \newcommand{\new}[1]{{{#1}}}
  \newcommand{\newer}[1]{{{#1}}}
  \newcommand{\acolor}[1]{{#1}}
  \newcommand{\bcolor}[1]{{#1}}
  \newcommand{\scolor}[1]{{#1}}
  \newcommand{\anote}[1]{\par\acolor{\textbf{JA: }\sf #1}} %
  \newcommand{\bnote}[1]{\par\bcolor{\textbf{SB: }\sf #1}} %
  \newcommand{\anote}[1]{}
  \newcommand{\bnote}[1]{}
\newcommand{\absz}{k}
\newcommand{\nsmp}{n}
\newcommand{\ent}{H}
\newcommand{\entdP}{\ent(\dP)}
\newcommand{\rentaldP}{\ent_\alpha(\dP)}
\newcommand{\renprm}{\alpha}
\newcommand{\eps}{\varepsilon}
\newcommand{\hhat}{\widehat{H}}
\newcommand{\logjab}[2]{\log^{(#1)}(#2)}
\newcommand{\logjabk}[1]{\logjab{#1}{\absz}}
\newcommand{\iter}{T}
\newcommand{\ns}{\nsmp}
\newcommand{\p}{\dP}
\newcommand{\SamC}[3]{\mathbb{S}\left(#1,#2,#3\right)}
\newcommand{\vSmp}{R}
\newcommand{\bSmp}{N}
\newcommand{\estII}{\widehat{H}_{II}}
\newcommand{\estIIs}{\widehat{H}_{II}^*}
\newcommand{\estis}{\widehat{H}_{\mathcal{I}}^*}
\newcommand{\esti}{\widehat{H}_{\mathcal{I}}}
\newcommand{\prt}{a}
\newcommand{\bset}[1]{\left \lbrace \bSmp_i \right \rbrace_{i = 1}^{#1} }
\newcommand{\vset}[1]{\left \lbrace \vSmp_i \right \rbrace_{i = 1}^{#1}}
\title{Estimating~Entropy~of~Distributions~in~Constant~Space}
\author{
  Jayadev Acharya\\
  Cornell University \\
  \tt{acharya@cornell.edu}\\
  \and
  Sourbh Bhadane \\
  Cornell University\\
  \tt{snb62@cornell.edu} \\
  \and 
  Piotr Indyk\\
  Massachusetts Institute of Technology\\
  \tt{indyk@mit.edu} \\
  \and
  Ziteng Sun\\
  Cornell University\\
  \tt{zs335@cornell.edu}\\
}
\begin{document}

\maketitle

\begin{abstract}
We consider the task of estimating the entropy of $k$-ary distributions from samples in the streaming model, where space is limited. Our main contribution is an algorithm that requires $O\left(\frac{k \log (1/\varepsilon)^2}{\varepsilon^3}\right)$ samples and a constant $O(1)$ memory words of space and outputs a $\pm\varepsilon$ estimate of $H(p)$. Without space limitations, the sample complexity has been established as $S(k,\varepsilon)=\Theta\left(\frac k{\varepsilon\log k}+\frac{\log^2 k}{\varepsilon^2}\right)$, which is sub-linear in the domain size $k$, and the  current algorithms that achieve optimal sample complexity also require nearly-linear space in $k$. 

Our algorithm partitions $[0,1]$ into intervals and estimates the entropy contribution of probability values in each interval. The intervals are designed to trade off the bias and variance of these estimates. 
{\let\thefootnote\relax\footnote{{This work is supported by NSF-CCF-1657471. This research started with the support of MIT-Shell Energy Research Fellowship to JA and PI, while JA was at MIT. PI was supported by NSF TRIPODS award \#1740751 and a Simons Investigator Award.}}}
\end{abstract}
\section{Introduction}
\noindent\textbf{Streaming Algorithms.}
Algorithms that require a limited memory/space/storage\footnote{We use space, storage, and memory interchangeably.} have garnered great interest over the last two decades, and are known as \emph{streaming algorithms}. Initiated by~\cite{MunroP80,FlajoletM85}, this setting became mainstream with the seminal work of~\cite{AlonMS96}. Streaming algorithms are particularly useful in handling massive datasets that cannot to be stored in the memory of the system. It is also applicable in networks where data is naturally generated sequentially and the data rates are higher than the capabilities of storing them, e.g., on a router. 

The literature on streaming algorithms is large, and many problems have been studied in this model. With roots in computer science, most of this literature considers the worst case model, where it is assumed that the input $\Xon:=X_1, \ldots, X_n$ is an arbitrary sequence over a domain of size $k$ (e.g., over $[k]:=\{1, \ldots, k\}$). The set-up is as follows:

\textit{Given a system with limited memory that can make a few (usually just one) passes over the input $\Xon$, estimate $f(\Xon)$ for some function $f$ of interest. The primary objective is solving the task with the least memory, which is called the \emph{space complexity}.}

In this streaming setting, some well studied problems include estimation of frequency moments of the data stream~\cite{AlonMS96, Indyk00, IndykW05}, estimation of Shannon and R\'enyi entropy of the empirical distribution of the data stream~\cite{LallSOXZ06, ChakrabartiDM06, ChakrabartiCM10, HarveyNO08, GuhaMV09}, estimation of heavy hitters~\cite{CharikarCF02, CormodeM05, MetwallyAE05, BhattacharyyaDW16}, and estimation of distinct elements~\cite{BarJRST02, KaneNW10}. While these consider worst case input, there has also been work on random order streams, where one still considers a worst case data stream $\Xon$, but feeds a random permutation $X_{\sigma(1)},\ldots, X_{\sigma(n)}$ of $\Xon$ as input to the algorithm~\cite{GuhaMV09, GuhaM09, ChakrabartiJP08}.

\medskip
\noindent\textbf{Statistical Estimation.}
Inferring properties of the underlying distribution given sample access is called statistical estimation. A typical set-up is as follows: 

\textit{Given independent samples $X_1, \ldots, X_n$  from an unknown distribution $\p$, the objective is to estimate a property $f(\p)$ using the fewest samples, called the \emph{sample complexity}.}

Distribution property estimation literature most related to our work include entropy estimation~\cite{Miller55, Paninski03, ValiantV11a, JiaoVHW15, WuY16, AcharyaDOS17, NIPS2018_8099, CharikarSS19}, support size estimation~\cite{ValiantV11a, WuY16, hao2019broad}, R\'enyi entropy estimation~\cite{AcharyaOST17, ObremskiS17, FukuchiS17}, support coverage estimation~\cite{EfronT76, OrlitskySW16}, and divergence estimation~\cite{HanJW16, Acharya18}. 


\medskip
\noindent\textbf{Streaming Algorithms for Statistical Estimation.}
While space complexity of streaming algorithms, and sample complexity of statistical estimation have both received great attention, statistical estimation under memory constraints has received relatively little attention. Interestingly, almost half a century ago, Cover and Hellman~\cite{HellmanC70, Cover69} studied statistical hypothesis testing with limited memory, and~\cite{LeightonR86} studied estimating the bias of a coin using a finite state machine. However, until recently, there are few works on learning with memory constraints. There has been a recent interest in space-sample trade-offs in statistical estimation~\cite{GuhaM07, ChienLM10, DaganS18, CrouchMW16, SteinhardtVW16, Raz16, MoshkovitzM17, JainT18}. Within these,~\cite{CrouchMW16} is the closest to our paper. They consider estimating the integer moments of distributions, which is equivalent to estimating R\'enyi entropy of integer orders under memory constraints. They present natural algorithms for the problem, and perhaps more interestingly, prove non-trivial lower bounds on the space complexity of this task. Very recently, a remarkable work of~\cite{DiakonikolasGKR19} obtained memory sample trade-offs for testing discrete distributions, which are tight in a some parameter regime. 

We initiate the study of distribution entropy estimation with space limitations, with the goal of understanding the space-sample trade-offs.

\subsection{Problem Formulation}
Let $\Delta_k$ be the set of all $k$-ary discrete distributions over $\cX=[k]:=\{0,1,\ldots, k-1\}$.
The Shannon entropy of $\p\in\Delta_k$ is $ \Ent{p} := -\sum_{x\in[k]} p\left(x\right) \log\left({p\left(x\right)}\right).$
Entropy is a fundamental measure of randomness and  a central quantity in information theory and communications. Entropy estimation is a key primitive  for feature selection in various machine learning applications.

Given independent samples $\Xon:=X_1, \ldots, X_n$ from an unknown $\p\in\Delta_k$, an entropy estimator is a possibly randomized mapping $\hhat:[k]^n\to \RR$. Given $\eps>0$, $\delta >0$, $\hhat$ is an $(\eps,\delta)$ entropy estimator if
\begin{align}
\label{eqn:estimator}
{\sup_{\p\in\Delta_k}}{\rm Pr}_{\Xon \stackrel{i.i.d.}{\sim} \p}\Paren{|\hhat(\Xon)-\entdP|>\eps}<\delta.
\end{align}

\smallskip
\noindent\textbf{Sample Complexity.}
The \emph{sample complexity} $S(H, k, \eps, \delta)$ is the least $n$ for which a $\hhat$ satisfying~\eqref{eqn:estimator} exists. Throughout this paper, we assume a constant error probability, say $\delta=1/3$,\footnote{For smaller $\delta$'s, we can apply median trick with an extra factor of $\log(1/\delta)$ samples.} and exclusively study entropy estimation. We therefore denote $S(H, k, \eps, 1/3)$  by $S(k, \eps)$. 

\smallskip
\noindent\textbf{Memory Model and Space Complexity.} The basic unit of our storage model is a \emph{word}, which consists of $\log k+\log (1/\eps)$ bits. This choice of storage model is motivated by the fact that at least $\log (1/\eps)$ bits are needed for a precision of $\pm \eps$, and $\log k$ bits are needed to store a symbol in $[k]$. The \emph{space complexity} of an algorithm is the smallest space (in words) required for its implementation.



\subsection{Prior Work}
\textbf{Distribution Entropy Estimation.}
Entropy estimation from samples has a long history~\cite{Miller55, Basharin59, AntosK01}. The most popular method is empirical plug-in estimation that outputs the entropy of the empirical distribution of the samples. Its sample complexity~\cite{AntosK01, Paninski03} is
\begin{align}
\label{eqn:empirical}
S^{e}(k,\eps)=\Theta\left(\frac{\absz}{\eps}+\frac{\log^2 k}{\eps^2}\right).
\end{align}
\noindent Paninski \cite{Paninski04b} showed that there exists an  estimator with sub-linear sample complexity in $\absz$. A recent line of work~\cite{ValiantV11a, WuY16, JiaoVHW15}  has characterized the optimal sample complexity  as
\begin{align}
		S(k,\eps) = \Theta\left(\frac{\absz}{\eps\log \absz}+\frac{\log^2 k}{\eps^2}\right).\label{eqn:optimal}
\end{align}
Note that while the empirical estimator has a linear sample complexity in the domain size $k$, the optimal sample complexity is sub-linear.

\smallskip
\noindent\textbf{Estimating Entropy of Streams.} 
There is significant work on estimating entropy of the stream with limited memory. Here, there are no distributional assumptions on the input stream $\Xon$, and the goal is to estimate $H(\Xon)$, the entropy of the empirical distribution of $\Xon$.~\cite{LallSOXZ06, BhuvanagiriG06, GuhaMV09, HarveyNO08, ChakrabartiCM10} consider multiplicative entropy estimation. These algorithms can be modified to additive entropy estimation by noting that $(1\pm\eps/\log {\min\{n,k\}})$ multiplicative estimation yields a $\pm\eps$ additive estimation. With this,~\cite{ChakrabartiCM10, GuhaMV09} give an algorithm requiring $O(\frac{\log^3 \nsmp}{\eps^2})$ words of space for $\pm\eps$ estimate of $H(\Xon)$.~\cite{HarveyNO08} give an algorithm using $O(\frac{\log^2 \nsmp\cdot \log \log \nsmp}{\eps^2})$ words of space. A space lower bound of $\Omega(1/\eps^2)$ words was also proved in~\cite{ChakrabartiCM10} for the worst-case setting.

Another widely used notion of entropy is R\'enyi entropy~\cite{Renyi61}. The R\'enyi entropy of $\dP$ of order $\alpha>0$ is 
$\rentaldP:=\log(\sum_{\smb}\p(x)^{\renprm})/({1-\renprm})$.
\cite{GoldreichR00, BarKS01, AcharyaOST17} show that the sample complexity  of estimating $\rentaldP$ is $\Theta({\absz^{1-1/\alpha}}/{\eps^2})$ for $\alpha\in\{2,3,\ldots\}$.~\cite{CrouchMW16} studies the problem of estimating the collision probability, which can be seen as estimating $\rentaldP$ for $\renprm=2$, under memory constraints. They propose an algorithm  with sample complexity $\nsmp$ and the memory $M$ satisfying  $\nsmp\cdot M\ge \Omega(\absz)$, when $\nsmp$ is at least $O(\absz^{1-1/\renprm})$. They also provide some (non-tight) lower bounds on the memory requirements.

\subsection{Our Results and Techniques}

Our goal is to design streaming algorithms for estimation of $H(\p)$ from a data stream of samples $\Xon\sim\p$, with as little space as possible. Our motivating question is:

\smallskip
\qquad\qquad\qquad\emph{What is the space-sample trade-off of entropy estimation over $\Delta_k$?}

\smallskip
The optimal sample complexity is given in~\eqref{eqn:optimal}. However, straight-forward implementations of sample-optimal schemes in~\cite{ValiantV11a, WuY16, JiaoVHW15} require nearly linear space complexity in $S(k,\eps)$, which is nearly linear (in $k$) words of space. Note by~\eqref{eqn:empirical} that when the number of samples is at least $S^{e}(k,\eps)$, the empirical entropy $H(\Xon)$ is a $\pm\eps$ estimate of $H(\p)$. We can therefore use results from streaming literature to estimate the empirical entropy of $\Xon$ with $n= S^{e}(k,\eps)$ samples to within $\pm\eps$, and in doing so, obtain a $\pm2\eps$ estimate of $H(\p)$.
In particular, the algorithm of~\cite{HarveyNO08} requires $S^{e}(k,\eps)$ samples, and with $O(\frac{\log^2 (k/\eps)\log \log (k/\eps)}{\eps^2})$ words of space, estimates $H(\p)$ to $\pm\eps$. 

\new{Our main result is an algorithm whose space complexity is a constant number of words and whose sample complexity is linear in $k$ (same as empirical estimation as a function of $k$).}
\begin{theorem} \label{thm:main_perf}
There is an algorithm that requires
$O\Paren{\frac{\absz(\log (1/\eps))^2}{\eps^3}}$ samples
and 20 words of space and estimates $H(\p)$ to $\pm \eps$. 
\end{theorem}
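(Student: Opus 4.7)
The plan is to design a constant-memory entropy estimator by partitioning the probability range $(0,1]$ into intervals and separately estimating the entropy contribution from each interval, with interval widths chosen to balance the bias of approximating $-p\log p$ as essentially constant within each interval against the variance of the per-interval sample-based estimator.

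\textbf{Algorithm.} Partition $(0,1]$ into $T$ intervals $I_j = [l_j, h_j]$ of bounded ratio, together with a ``tiny'' bucket $I_0 = (0, l_1]$ where $l_1 \lesssim \varepsilon/(k \log k)$; since $-p\log p$ is increasing on $(0,1/e)$, the total entropy contribution of $I_0$ is at most $k l_1 \log(1/l_1) = O(\varepsilon)$ and can be discarded. Run $T$ subroutines sequentially, one per interval, on disjoint sample batches. The $j$th subroutine estimates $E_j := \sum_{x : p_x \in I_j} (-p_x \log p_x)$ while storing only a symbol label, a constant number of integer counters, and a running sum, for $O(1)$ words in total. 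A natural realization uses the identity $E_j = \mathbb{E}_{Y \sim p}\bigl[(-\log p_Y)\,\indctr{p_Y \in I_j}\bigr]$: in each of $r_j$ rounds, draw a probe $Y \sim p$, draw $m_j$ further samples and compute $\hat p_Y := \#\{i : X_i = Y\}/m_j$, and if $\hat p_Y \in I_j$ add $-\log \tilde p_j$ to the running sum, where $\tilde p_j \in I_j$ is a fixed representative. The final output is $\hat H := \sum_j \hat E_j$.

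\textbf{Bias and variance.} The bias of $\hat E_j$ has two parts: (a) replacing $-\log p_Y$ by $-\log \tilde p_j$ loses at most $\log(h_j/l_j)$ conditioned on $p_Y \in I_j$, giving per-interval bias at most $q_j \log(h_j/l_j)$ with $q_j := \sum_{x : p_x \in I_j} p_x$, which sums to $O(\varepsilon)$ since $\sum_j q_j \le 1$ provided ratios are $1 + O(\varepsilon)$; and (b) misclassification of $Y$ near interval boundaries, bounded by a Chernoff argument once $m_j \gtrsim \log(T/\varepsilon) / (\varepsilon^2 l_j)$ so that $|\hat p_Y - p_Y| \le \Theta(\varepsilon) p_Y$ with probability $1 - O(\varepsilon/T)$. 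For the variance, each round contributes an indicator-weighted quantity of magnitude $O(\log(1/\varepsilon))$ with indicator probability $q_j$; a Bernstein-type bound then yields $|\hat E_j - \mathbb{E}[\hat E_j]| = O\bigl(\sqrt{q_j}\,\log(1/\varepsilon)/\sqrt{r_j}\bigr)$, so choosing $r_j$ appropriately makes the total variance contribution $O(\varepsilon)$ after union-bounding over intervals.

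\textbf{Main obstacle.} The technical crux is the joint calibration of $\{l_j, m_j, r_j\}$ so that the total sample count $\sum_j r_j(1 + m_j)$ collapses to $O(k \log^2(1/\varepsilon)/\varepsilon^3)$ while bias and variance each stay $O(\varepsilon)$. A geometric schedule $l_j \asymp 2^{-j}$ is the natural choice, but the smallest intervals demand the largest $m_j$ (since rare symbols need many observations to pin down their probability) and carry the largest aggregate $q_j$ (there can be $\Omega(k)$ symbols clustered near probability $1/k$), so telescoping the costs via carefully chosen geometric sums is where the advertised $O(k \log^2(1/\varepsilon)/\varepsilon^3)$ bound will be won or lost. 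The constant-memory claim is preserved because the $T$ subroutines run sequentially on independent sample batches: only one subroutine's state is resident at any moment, and per-symbol counters are never needed.
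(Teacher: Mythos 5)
The high-level plan — partition the probability range, run one constant-memory subroutine per interval, and rebalance bias and variance across intervals — does match the paper's strategy, and the student's observation that small-probability symbols need long windows while the per-interval range can be controlled is the right intuition. But the proposal has two concrete gaps, each of which would defeat the claimed $O\Paren{k\log^2(1/\eps)/\eps^3}$ bound if carried out as written.

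First, the choice of a \emph{geometric} interval schedule $l_j\asymp 2^{-j}$ (or ratio $1+O(\eps)$) together with a \emph{fixed representative} $\tilde p_j$ per interval is not what makes the bound go through. Your per-round contribution is $(-\log\tilde p_j)\cdot\indctr{\hat p_Y\in I_j}$, which takes values in $\{0,-\log\tilde p_j\}$; the Bernstein bound you invoke therefore has deviation $\sqrt{q_j}\,|\log\tilde p_j|/\sqrt{r_j}$, with $|\log\tilde p_j|$ as large as $\log k$ for the intervals near $1/k$ — not $\log(1/\eps)$ as your sketch asserts. Pushing that below $\eps$ already costs $r_j=\Omega(q_j\log^2 k/\eps^2)$ rounds for the smallest intervals, on top of $m_j=\Omega(k/\eps^2)$ samples per round, which is $\Omega(k\log^2 k/\eps^4)$, not $O(k\log^2(1/\eps)/\eps^3)$. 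Moreover with ratio-$(1+\eps)$ intervals you have $T=\Theta(\log k/\eps)$ of them, so union bounds and the telescoping sum both blow up, and the boundary-misclassification problem becomes severe (the entire interval is a boundary region when widths are $O(\eps)$). You correctly flag this as the ``main obstacle,'' but you do not resolve it; the result lives entirely in that resolution. The paper's fix is to use an iterated-logarithm schedule $h_i=(\log^{(i-1)}k)^\beta/k$ with $T=\log^* k$ intervals, so that the costs decay geometrically across intervals, and to replace the fixed representative $-\log\tilde p_j$ by a per-draw plug-in $\log\Paren{N_i/(N_{x,i}+1)}$ that is \emph{clipped from below} at $\log\frac1{4h_i}$; conditioned on assignment to $I_i$ the clipped estimate lies in a window of width $O\Paren{\log(\log^{(i-1)}k/\eps)}$, which for the bottleneck interval is $O(\log(1/\eps))$ — this is precisely how the spurious $\log k$ is removed.

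Second, your decomposition $E_j=\sum_{x:\,p_x\in I_j}(-p_x\log p_x)$ partitions by \emph{true} probability, which forces you to bound misclassification error separately. The paper instead uses a decomposition (its Lemma~\ref{lem:decom}) that is exact for \emph{any} classifier $\cA:[k]\to\{I_1,\dots,I_T\}$: $H(p)=\sum_j p_\cA(I_j)H_j$ with $H_j=\EE_{X\sim p_\cA(\cdot\mid I_j)}[-\log p(X)]$. There is no misclassification bias in this identity at all; the algorithm merely needs to choose $\cA$ so that, conditioned on the event $\cA(X)=I_j$, $-\log p(X)$ is concentrated enough for the clipping to introduce only $O(\eps)$ error. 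This is a cleaner route than your Chernoff-plus-boundary argument and is what makes the bias analysis tractable. In short: right skeleton, but the interval geometry, the form of the per-interval estimator, and the robustness-to-classifier-error decomposition are each genuinely different from what you propose, and each is load-bearing for the stated sample complexity.
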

 
The results and the state of the art are given in Table~\ref{tab:results}. A few remarks are in order.

\noindent \textbf{Remark.}
 (1). Our algorithm can bypass the lower bound of $\Omega(1/\eps^2)$ for entropy estimation of data-streams since $\Xon$ is generated by a distribution and not the worst case data stream. 
(2). Consider the case when $\eps$ is a constant, say $\eps=1$. Then, the optimal sample complexity is $\Theta(\frac k{\log k})$ (from~\eqref{eqn:optimal}). However, all known implementations of the sample-optimal algorithms requires $\tilde \Omega (k)$ space. Streaming literature provides an algorithm with $O(k)$ samples and $\tilde O((\log k)^2)$ memory words. We provide an algorithm with $O(\absz)$ samples, and 20 memory words. Compared to the sample-optimal algorithms,  we have a $\log k$ blow-up in the sample complexity, but an exponential reduction in space. 

\begin{table}[htb]
\caption{Sample and space complexity for estimating $\entdP$.}
	\begin{center}
		\begin{tabular}{|c|c|c|c|c|c|c|}
			\hline
			Algorithm & Samples & Space (in words) \\ \hline
			\shortstack{Sample-Optimal~~\cite{ValiantV11a},\cite{ WuY16, JiaoVHW15} }& $\Theta\left(\frac{\absz}{\eps\log \absz}+\frac{\log^2 k}{\eps^2}\right)$ & $O\left(\frac{\absz}{\eps\log \absz}+\frac{\log^2 k}{\eps^2}\right)$ \\ \hline 
			\shortstack{Streaming  \cite{ChakrabartiCM10, HarveyNO08} }& $O\left(\frac{\absz}{\eps}+\frac{\log^2 k}{\eps^2}\right)$ & $O\Paren{{\log^2 (\frac{\absz}{\eps})\log \log (\frac\absz{\eps})}/{\eps^2}}$\\\hline
			Algorithm~\ref{alg:genint} & $O\Paren{\frac{\absz\Paren{\log (1/\eps)}^2}{\eps^3}}$ &  $20$\\ \hline
		\end{tabular}
	\end{center}
	\label{tab:results}
\end{table}

%

%
%
\label{sec:techniques}
\noindent We now provide a high level description of our approach and techniques. We can write $H(\p)$ as 
\begin{align}
\label{eqn:expansion-entropy}
H(\p) = \sum_x \p(x)\log \frac1{\p(x)} = \EE_{X\sim p}\left[ \log \frac1{\p(X)} \right].
\end{align}

\noindent\textbf{A Simple Method.} Based on this equation, we build layers of sophistication to a simple approach which requires small space.  Repeat for $R$ iterations:

\noindent\quad 1. Obtain a draw $X\sim p$.

\noindent\quad 2. Using constant memory words, over the next $N$ samples, estimate $\log (1/\p(X))$, and maintain a running average over the iterations.

We need $N$ to be large enough to obtain a \emph{good estimate} $\widehat p(X)$ of $\p(X)$ for the term inside the expectation in~\eqref{eqn:expansion-entropy}, and we need $R$ large enough for the empirical means of $\log (1/\widehat p(X))$ over $R$ iterations to converge to the true mean. The number of samples needed is $NR$. This approach is detailed in Algorithm~\ref{alg1} (in Section~\ref{sec:simplealgo}) and its performance is given in Theorem~\ref{thm:one-interval}. This approach requires $O(1)$ memory words, however the sample complexity is super-linear in $k$. 
 
\medskip 
\noindent\textbf{Intervals for Better Sample Complexity.} To improve the sample complexity, we partition $\left[ 0,1\right]$ into $T$ disjoint intervals (Algorithm~\ref{alg1} corresponds to $T=1$). In Lemma~\ref{lem:decom} we express $H(\p)$ as a sum of entropy-like expressions defined over probability values in these $T$ intervals. We will then estimate each of the terms separately with the approach stated above. We will show that the sample complexity as a function of $k$ drops down roughly as $k (\log^{(T)} k)^2$, where $\log^{(T)}$ is the $T$th iterated logarithm, while the space complexity is still constant memory words. 

The algorithm will essentially perform the simple approach above separately for probabilities within each interval. While simple to state, there are several bells and whistles needed to make this approach work. The essence is that when $p(X)$ is large, fewer samples are needed to estimate $p(X)$ (small $N$). However, if the intervals are chosen such that small probabilities are also contained in small intervals, the number of iterations $R$ needed for these intervals can be made small (the range of random variables in Hoeffding's inequality is smaller). Succinctly, the approach can be summarized as follows:
 
\textit{Fewer samples are needed to estimate the large probabilities, and fewer iterations are needed for convergence of estimates for small probabilities by choosing the intervals carefully.}

\medskip
\noindent\textbf{Some Useful Tools.}
\label{sec:preliminaries}
We now state two concentration inequalities that we use throughout this paper. 
%

\begin{lemma}(\textbf{Hoeffding's Inequality})~\cite{Hoeffding63} \label{lem:hoeff}
Let $X_1, \ldots, X_m\in[a_i, b_i]$ be independent random variables. Let $X =  (X_1+\ldots+X_m)/m$, then 
$	\probof{\left \lvert X - \Expc{X} \right \rvert \geq t } \leq 2\exp\left(\frac{-2(mt)^2}{\sum_i\left(b_i-a_i\right)^2} \right).
$
\end{lemma}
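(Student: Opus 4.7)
The plan is to use the classical Chernoff-style exponential moment approach, which proceeds in three steps: an MGF bound per random variable, independence to factor the joint MGF, and optimization over a free parameter.

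First I would write, for any $s>0$, the one-sided tail
\[
\Pr\Paren{X - \EE[X] \ge t} = \Pr\Paren{e^{s\sum_i (X_i - \EE[X_i])} \ge e^{smt}} \le e^{-smt}\,\EE\!\left[e^{s\sum_i (X_i - \EE[X_i])}\right]
\]
by Markov's inequality. By independence of the $X_i$'s, the expectation factorizes as $\prod_i \EE\!\left[e^{s(X_i-\EE[X_i])}\right]$.

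The key technical step I would invoke is Hoeffding's lemma: if $Y\in[a,b]$ with $\EE[Y]=0$, then $\EE[e^{sY}] \le \exp(s^2(b-a)^2/8)$. I would prove this by convexity of $y\mapsto e^{sy}$, writing $e^{sy}\le \tfrac{b-y}{b-a}e^{sa} + \tfrac{y-a}{b-a}e^{sb}$ for $y\in[a,b]$, taking expectations, and then analyzing the function $\phi(u):=-pu + \log(1-p+pe^u)$ (where $p=-a/(b-a)$ and $u=s(b-a)$) via a second-order Taylor expansion showing $\phi''(u)\le 1/4$; integrating twice from $u=0$ yields $\phi(u)\le u^2/8$. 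Applying this lemma to each centered factor $X_i-\EE[X_i]\in[a_i-\EE[X_i],\,b_i-\EE[X_i]]$ (whose range has the same length $b_i-a_i$) gives
\[
\Pr\Paren{X - \EE[X] \ge t} \le \exp\Paren{-smt + \tfrac{s^2}{8}\sum_i (b_i-a_i)^2}.
\]

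Finally I would optimize over $s>0$: the right-hand side is minimized at $s^\star = 4mt/\sum_i(b_i-a_i)^2$, producing the exponent $-2(mt)^2/\sum_i(b_i-a_i)^2$. The lower tail $\Pr(\EE[X]-X\ge t)$ is handled by applying the identical argument to $-X_i$ (whose range has the same length), and a union bound over the two tails supplies the factor of $2$. The main obstacle, and the only nontrivial ingredient, is Hoeffding's lemma itself — everything else is routine Markov/Chernoff bookkeeping and calculus optimization; the lemma's proof requires the slightly delicate Taylor-with-remainder estimate on $\phi''$, but this is standard.
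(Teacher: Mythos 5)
Your proof is correct, and it is the standard Chernoff/MGF argument via Hoeffding's lemma. Note, however, that the paper does not actually prove this statement: Lemma~\ref{lem:hoeff} is cited directly from Hoeffding's original paper and used as a black box, so there is no ``paper's proof'' to compare against. Your write-up supplies the classical textbook derivation (Markov on $e^{s\sum(X_i-\EE X_i)}$, factorization by independence, the per-variable MGF bound $\EE[e^{sY}]\le e^{s^2(b-a)^2/8}$ via convexity and the second-order estimate $\phi''\le 1/4$, optimization at $s^\star = 4mt/\sum_i(b_i-a_i)^2$, and a union bound over the two tails for the factor of $2$), all of which is accurate. One small bookkeeping detail worth stating explicitly: the event $\{X-\EE[X]\ge t\}$ equals $\{\sum_i(X_i-\EE[X_i])\ge mt\}$, which is where the $mt$ in your exponent comes from; you use this implicitly and it is fine.
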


In some algorithms we consider, $m$ itself is a random variable. In those cases, we will use the following variant of Hoeffding's inequality, which  is proved in Section~\ref{app:simple}.

%
%
\new{
\begin{lemma} (\textbf{Random Hoeffding's Inequality}) \label{lem:ranhoeff}
	Let $M \sim \text{Bin}\left(m,p\right)$. Let $X_1, \ldots, X_M$ be independent random variables such that $X_i \in \left[a,b\right]$. Let $X = (X_1+\ldots+X_M)/M$. Then for any $0<p\le 1$,
	\begin{equation}\label{eq:RH}
	\Pr \left( \left \lvert X - \Expc{X} \right \rvert \geq \frac{t}{p} \right) \leq 3\exp\left(\frac{-mt^2}{8p\left(b-a\right)^2} \right).
	\end{equation}
\end{lemma}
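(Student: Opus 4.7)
The plan is to decompose the event by conditioning on $M$, splitting on whether $M$ is close to its expectation $mp$ or not. The randomness of $M$ is the key obstacle to applying Hoeffding directly, since Hoeffding's exponent scales in the number of summands; when $M$ is small, the average is very noisy. So I would first show that $M$ concentrates around $mp$, and then apply standard Hoeffding on the event that $M$ is not too small.

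First I would assume without loss of generality that $t \leq p(b-a)$. If $t > p(b-a)$, then $t/p > b-a$, and since $X$ and $\E[X]$ both lie in $[a,b]$ (as convex combinations of values there, conditional on $M\geq 1$), the deviation $|X-\E[X]|$ is at most $b-a < t/p$, making the probability zero. Under this assumption, I aim to show that both tail pieces are bounded by (a constant times) $\exp(-mt^2/(8p(b-a)^2))$.

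Next I would bound the lower tail of $M$. Since $M \sim \text{Bin}(m,p)$, the standard multiplicative Chernoff bound gives $\Pr(M < mp/2) \leq \exp(-mp/8)$. Using the assumption $t \leq p(b-a)$, I get $mp/8 \geq mt^2/(8p(b-a)^2)$, so $\Pr(M < mp/2) \leq \exp(-mt^2/(8p(b-a)^2))$.

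Finally, on the event $\{M \geq mp/2\}$, I would apply Lemma~\ref{lem:hoeff} conditional on $M$: since $X_1,\ldots,X_M$ are independent and each lies in $[a,b]$,
\[
\Pr\Paren{|X-\E[X\mid M]|\geq t/p \,\Big|\, M} \leq 2\exp\Paren{-\frac{2M(t/p)^2}{(b-a)^2}} \leq 2\exp\Paren{-\frac{mt^2}{p(b-a)^2}},
\]
using $M\geq mp/2$ in the last step. Because the $X_i$ share a common mean (or more generally, by the hypothesis that $\E[X\mid M]=\E[X]$ in the intended application), this conditional bound transfers to the unconditional deviation $|X-\E[X]|$. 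Summing the two cases via a union bound yields the required $3\exp(-mt^2/(8p(b-a)^2))$ (the bound in the second case is even smaller than the stated RHS, absorbing the factor of 2 plus the contribution from the first case into the constant 3). The main delicacy is handling $\E[X]$ correctly when $M$ is random—I would explicitly invoke identical means (or condition on $M\geq 1$) to identify $\E[X\mid M]$ with $\E[X]$.
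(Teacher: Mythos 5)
Your proof is correct and is essentially the standard (and almost certainly the paper's) route: first show $M$ concentrates via a multiplicative Chernoff bound, then apply ordinary Hoeffding conditionally on $\{M \geq mp/2\}$, and union-bound; the $1+2=3$ accounting that produces the constant $3$ in the statement confirms the decomposition. The one spot that needs care is the step from the conditional deviation $|X-\E[X\mid M]|$ to $|X-\E[X]|$, and you handle it correctly by invoking that the $X_i$ share a common mean (the intended i.i.d.\ reading of the lemma), so $\E[X\mid M]=\E[X]$ on $\{M\geq 1\}$.
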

}

\medskip
\noindent\textbf{Outline.} In Section~\ref{sec:simplealgo} we describe the simple approach and its performance in Theorem~\ref{thm:one-interval}. In Section~\ref{sec:two_interval}, Algorithm~\ref{alg:twoint} we show how the sample complexity in Theorem~\ref{thm:one-interval} can be reduced from $k\log^2 k$  to $k(\log \log k)^2$ in Theorem~\ref{thm:perf_two} by choosing two intervals ($T=2$). \new{The algorithm for general $T$ is described in Section~\ref{sec:general_interval}, and the performance of our main algorithm is given in Theorem~\ref{thm:main_perf}.}


\section{A Building Block: Simple Algorithm with Constant Space}\label{sec:simplealgo}

We propose a simple method (Algorithm~\ref{alg1}) with the following guarantee. 
\begin{theorem}
\label{thm:one-interval}
Let $\eps>0$. Algorithm~\ref{alg1} takes $O\left(\frac{k\log^2 \left(k/\eps\right)}{\eps^3}\right)$ samples from $\p\in\Delta_k$, uses at most 20 words of memory, and outputs $\bar H$, such that with probability at least $2/3$, $\abs{\bar H - H(\p)}<\eps$.
\end{theorem}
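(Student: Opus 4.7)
The plan is to analyze the estimator sketched in the outline: in each of $R$ iterations, draw $X_i \sim \p$, count its occurrences $N_i$ among the next $N$ samples, and set $Z_i := \log(N/\max(N_i,1))$. The output is $\bar H := (1/R)\sum_{i=1}^R Z_i$. I would choose $N = \Theta(\absz/\eps)$ and $R = \Theta(\log^2(\absz/\eps)/\eps^2)$, so that $NR = O(\absz \log^2(\absz/\eps)/\eps^3)$; the memory is $O(1)$ words (concretely, 20 suffice) since the algorithm only has to track the current $X_i$, the running count $N_i$, the partial sum of the $Z_i$'s, and a couple of loop counters.

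I would decompose $|\bar H - \entdP| \le |\bar H - \EE[\bar H]| + |\EE[\bar H] - \entdP|$ and bound each term by $\eps/2$. The variance piece is immediate from Lemma~\ref{lem:hoeff}: the $Z_i$ are i.i.d., each lies in $[0,\log N] = [0, O(\log(\absz/\eps))]$, and so Hoeffding gives $|\bar H - \EE[\bar H]| \le \eps/2$ with probability at least $5/6$ provided $R \gtrsim \log^2(\absz/\eps)/\eps^2$, matching the choice above.

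For the bias, write $\EE[\bar H] - \entdP = \sum_x \p(x)\, b(x)$ where $b(x) := \EE[\log(N/\max(N_i,1)) \mid X_i = x] - \log(1/\p(x))$, and $N_i$ conditioned on $X_i = x$ is $\text{Bin}(N,\p(x))$. Setting $\mu(x) := N\p(x)$, split the sum on whether $\mu(x) \ge 1$. In the large regime $\mu(x) \ge 1$, a second-order Taylor expansion of $\log(\mu(x)/N_i)$ around $\mu(x)$, together with $\Var(N_i) \le \mu(x)$, yields $|b(x)| = O(1/\mu(x))$, so $\sum_{\mu(x)\ge 1} \p(x)|b(x)| = O(\absz/N) = O(\eps)$. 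In the small regime $\mu(x) < 1$, a crude bound gives $|b(x)| \le \log(1/\mu(x)) + O(1)$, and the bookkeeping fact that the map $p \mapsto p\log(1/(Np))$ attains its maximum on $(0,1/N)$ at $p = 1/(eN)$ with value $1/(eN)$ gives $\sum_{\mu(x) < 1} \p(x)|b(x)| \le \absz/(eN) + \sum_{\mu(x)<1}\p(x) = O(\eps)$. Picking the hidden constant in $N = C\absz/\eps$ large enough then makes each contribution at most $\eps/4$.

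The main obstacle, to my eye, is pinning down the bias in the small-$\mu(x)$ regime: per symbol the estimator is necessarily wrong, but those symbols are drawn rarely, and it is precisely the $p\log(1/(Np))$ bookkeeping that prevents an extra $\log$ factor and keeps the sample complexity at $\log^2(\absz/\eps)$ rather than $\log^3$. A union bound over the variance and bias events delivers $|\bar H - \entdP| \le \eps$ with probability at least $2/3$, with $O(1)$ words of memory, as claimed.
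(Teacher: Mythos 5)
Your overall structure matches the paper's: split the error into a bias term $|\EE[\bar H]-H(p)|$ and a concentration term $|\bar H-\EE[\bar H]|$, take $N=\Theta(k/\eps)$ to drive the bias below $\eps/2$, and take $R=\Theta(\log^2(k/\eps)/\eps^2)$ so Hoeffding controls the concentration (each per-iteration estimate lies in an interval of width $O(\log(k/\eps))$). Your memory accounting and the product $NR$ agree with the paper as well, and the concentration step is correct.

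The bias analysis, however, has a genuine gap. You propose a second-order Taylor expansion of $\log(\mu(x)/N_i)$ around $N_i=\mu(x)$ and claim $\Var(N_i)\le\mu(x)$ gives $|b(x)|=O(1/\mu(x))$ for $\mu(x)\ge 1$. The $\log$ function has an unbounded Taylor remainder as its argument approaches $0$, so the contribution of the event $\{N_i=0\}$ — which has probability up to $e^{-1}$ even when $\mu(x)$ is around $1$, and on which $Z_i-\log(1/p(x))=\log\mu(x)$ — is simply not controlled by a variance bound. You would need to peel off the bad tail $\{N_i\le c\,\mu(x)\}$ with a Chernoff/Poisson-tail estimate and bound its contribution separately; that is the delicate part, and it is not a corollary of $\Var(N_i)\le\mu$. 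Part of the difficulty is self-inflicted: you switched the smoothing from the paper's $N_i+1$ to $\max(N_i,1)$. The $N_i+1$ smoother is what makes the analysis trivial: $\EE[(N_i+1)/(Np(x))]=1+1/(Np(x))$ exactly, and $\EE[1/(N_i+1)]\le 1/(p(x)(N+1))$ by a one-line binomial identity, so one application of Jensen's inequality to the concave $\log$ on each side yields the two-sided bound $0\le H(p)-\EE[\bar H]\le k/N$ with no case split at all. In particular, $\log\bigl(1+1/(Np(x))\bigr)\le 1/(Np(x))$ handles the small-$\mu(x)$ symbols automatically and uniformly, so the $p\log(1/(Np))$ bookkeeping you flag as the main obstacle never arises. (Your small-$\mu$ bound $|b(x)|\le\log(1/\mu(x))+O(1)$ and the observation that $p\mapsto p\log(1/(Np))$ is maximized at $p=1/(eN)$ are correct, but are superfluous once the add-one smoothing and Jensen are in place.)
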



Based on ~\eqref{eqn:expansion-entropy}, each iteration of Algorithm~\ref{alg1} obtains a draw $X$ from $\p$ and estimates $\log(1/\p(X))$. To avoid assigning zero probability value to $\p(X)$, we do add-1 smoothing to our empirical estimate of $\p(X)$. The bias in our estimator can be controlled by the choice of $N$. 
\begin{algorithm}[ht]
	\caption{Entropy estimation with constant space: Simple Algorithm} \label{alg1}
	\begin{algorithmic}[1]
	\Require Accuracy parameter $\eps>0$, a data stream $X_1, X_2, \ldots\sim p$
		\State Set 
		\[
			R \gets 4\log^2(1+2k/\eps)/\eps^2,\qquad \bSmp \gets 2k/\eps, \qquad S \gets 0
		\]
		\For{$t=1, \ldots, \vSmp $}
		\State Let $x\gets$ the next element in the data stream 
		\State $N_{x}\gets$ \# appearances of $x$ in the next $\bSmp$ symbols
		\State $\widehat{H}_t = \log\left({\bSmp}/({N_{x} + 1})	\right) $		 \label{entstepalg1}
		\State  $S = S + \widehat{H}_t$
		\EndFor
		\State $\bar{H} = S/ \vSmp$
	\end{algorithmic}
\end{algorithm}


\noindent \textbf{Memory Requirement.} Algorithm~\ref{alg1} only maintains a running sum at the end of each iteration. We can use two words in total to store $k$ and $\eps$. Since $\bSmp, \vSmp$ are program constants they are computed on the fly. We reserve one word to store $x$ and two words to keep track of $N_x$ in each iteration since $N_x \le N \le k^2/\eps^2$. We use three words to store the counter $t$ since $t \le R \le k^3/\eps^3$. We use two words each to store $S$ and $\hhat_t$ (and store the final entropy estimate in one of them). Thus the algorithm uses less than 20 words of space.

\noindent\textbf{Sample Complexity.} To bound the accuracy, note that $\bar H$ is the mean of $R$ i.i.d. random variables $\widehat H_1, \ldots, \widehat H_R$. We bound the bias and prove concentration of $\bar H$ using Lemma~\ref{lem:hoeff}.

\noindent\textit{Bias Bound.} Large $N$ in Algorithm~\ref{alg1} gives a better estimate of $\p(X)$, and small bias in entropy estimation.
\begin{lemma}(\textbf{Bias Bound})\label{lem:mdlbias}
	$\left \lvert \Expc{\bar{H}}-\Ent{p}  \right \rvert \leq \frac{k}{\bSmp}. \nonumber$
\end{lemma}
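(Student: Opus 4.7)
The plan is to reduce the bias of the average to the bias of a single iteration, condition on the first draw, and then apply elementary one-sided inequalities for $\log$ in each direction.

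First I would note that the iterations are i.i.d., so $\mathbb{E}[\bar H] = \mathbb{E}[\widehat H_1]$. Conditioning on the first draw $X = x$ (which happens with probability $p(x)$), the count $N_x$ over the next $N$ fresh samples is distributed as $\mathrm{Bin}(N, p(x))$. Therefore
\begin{equation*}
\mathbb{E}[\bar H] - H(p) = \sum_x p(x)\,\mathbb{E}_{N_x \sim \mathrm{Bin}(N,p(x))}\!\left[\log\!\frac{N p(x)}{N_x + 1}\right].
\end{equation*}
So the problem reduces to bounding the per-symbol expectation $\mathbb{E}\big[\log(u)\big]$ with $u := N p(x)/(N_x+1)$ from both sides.

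For the lower bound on the bias I would use the inequality $\log u \ge 1 - 1/u$ (valid for $u > 0$). Since $1/u = (N_x+1)/(N p(x))$ and $\mathbb{E}[N_x] = N p(x)$, taking expectations gives $\mathbb{E}[\log u] \ge 1 - (N p(x)+1)/(N p(x)) = -1/(N p(x))$. Summing with the weight $p(x)$, I get $\mathbb{E}[\bar H] - H(p) \ge -\sum_x 1/N = -k/N$. This is the direction that actually costs $k/N$, and it is the one sensitive to the add-1 smoothing.

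For the upper bound I would use $\log u \le u - 1$, yielding $\mathbb{E}[\log u] \le N p(x)\cdot \mathbb{E}[1/(N_x+1)] - 1$. Then I invoke the classical identity $\mathbb{E}[1/(Y+1)] = (1 - (1-q)^{N+1})/((N+1)q)$ for $Y \sim \mathrm{Bin}(N,q)$ (proved by rewriting $\binom{N}{k}/(k+1) = \binom{N+1}{k+1}/(N+1)$ and reindexing). This gives $N p(x)\cdot \mathbb{E}[1/(N_x+1)] = \tfrac{N}{N+1}(1 - (1-p(x))^{N+1}) \le N/(N+1)$, so $\mathbb{E}[\log u] \le -1/(N+1) \le 0$. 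Summing, $\mathbb{E}[\bar H] - H(p) \le 0 \le k/N$.

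Combining the two sides yields $|\mathbb{E}[\bar H] - H(p)| \le k/N$, which in fact reveals that the bias is one-sided (non-positive), so only the lower bound truly requires the $k/N$ budget. There is no real obstacle; the mildly nontrivial ingredient is the $\mathbb{E}[1/(Y+1)]$ identity for binomial $Y$, and the only thing to be careful about is expressing $\mathbb{E}[\bar H] - H(p)$ in the symmetric form above so that the linearizations of $\log$ around $u=1$ yield the same $1/(N p(x))$ term that collapses to $k/N$ upon summation.
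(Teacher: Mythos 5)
Your proof is correct and follows essentially the same route as the paper's: decompose the bias symbol-by-symbol as $\mathbb{E}[\bar H]-H(p)=\sum_x p(x)\,\mathbb{E}\big[\log\big(Np(x)/(N_x+1)\big)\big]$, bound each direction, and invoke the binomial reciprocal moment $\mathbb{E}[1/(N_x+1)]=\big(1-(1-p(x))^{N+1}\big)/\big((N+1)p(x)\big)$ to get the one-sided conclusion $-k/N\le \mathbb{E}[\bar H]-H(p)\le 0$. The only variation is that you replace the paper's two-step chain (Jensen's inequality followed by $\log(1+t)\le t$) with the single pointwise tangent-line bounds $1-1/u\le\log u\le u-1$, which is an equivalent and slightly more direct way to obtain the same per-symbol estimates.
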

\begin{proof}
Each iteration of Algorithm~\ref{alg1} chooses $x$ drawn from $p$. Therefore, 
\begin{equation}
\Expc{\bar{H}} =\Expc{\bar{H}_t}=\sum\limits_{x \in [k]} p\left( x \right) \Expc{ \log \left( \frac{\bSmp}{N_{x}+1} \right)},
\end{equation}
where the expectation is over the randomness in $N_x$. 
Therefore, 
\begin{align}
\Ent{p} -  \Expc{ \bar{H}} = \sum\limits_{x \in [k]} p\left(x\right) \log \frac{1}{p\left(x\right)} - \sum\limits_{x\in [k]} p\left(x\right)\Expc{\log \left( \frac{\bSmp}{N_{x}+1} \right)}
= \sum\limits_{x \in [k]} p\left(x \right) \Expc{ \log \left( \frac{N_{x}+1}{\bSmp p\left(x\right)} \right)}.  \nonumber
\end{align}
We now bound this expression. 
\begin{align}
 \sum\limits_{x \in [k]} p\left(x \right) \Expc{ \log \left( \frac{N_{x}+1}{\bSmp p\left(x\right)} \right)}
\stackrel{(a)}{\leq} \sum\limits_{x \in [k]} p\left(x \right)  \log \left( \Expc{\frac{N_{x}+1}{\bSmp p\left(x\right)}}  \right) 
\stackrel{(b)}{=} \sum\limits_{x \in [k]} p\left(x \right)  \log \left( 1 + \frac{1}{\bSmp p\left(x\right)} \right)
\stackrel{(c)}{\leq} \frac{k}{\bSmp},\nonumber
\end{align}
where $(a)$ uses concavity of logarithms, $(b)$ follows since $N_x$ is distributed $Bin(N, p(x))$ and therefore has mean $N p(x)$, and $(c)$ uses $\log (1+x)\le x$. To lower bound the expression, we upper bound its negative. 
\begin{align}
\sum\limits_{x \in [k]} p\left(x \right) \Expc{\log \left( \frac{\bSmp p\left(x\right)}{N_{x}+1} \right)}
&\stackrel{(a)}{\leq} \sum\limits_{x \in \mathcal{X}} p\left(x \right)  \log \left( \Expc{\frac{\bSmp p\left(x\right)}{N_{x}+1}} \right) 
\stackrel{(b)}{\leq}  \sum\limits_{x \in \mathcal{X}} p\left(x \right) \log \left( \frac{\bSmp}{\bSmp+1} \right) < 0. \label{eq:invbinom}
\end{align}
where $(a)$ uses concavity of logarithms is obtained using Jensen's inequality and \eqref{eq:invbinom} follows from the following claim plugging in $r=p(x)$, and $m=N$. 

\begin{lemma}\label{binreciplemma}
Let $X \sim \text{Bin} \left(m,r\right)$, then
$
\Expc{\frac{1}{X+1}} \leq \frac{1}{r\left(m+1\right)}.
$
\end{lemma}
\begin{proof}
\[
	\Expc{\frac{1}{X+1}} = \frac1{m+1}\sum\limits_{l=0}^{m} \frac{m+1}{l+1} {{m}\choose{l}} r^l\left(1-r\right)^{m-l}= \frac{1-(1-r)^{m+1}}{r\left(m+1\right)}\leq \frac{1}{r\left(m+1\right)}.\qedhere
\]
\end{proof}
\noindent Combining the upper and lower bound on $\Ent{p} -  \Expc{ \bar{H}}$ proves the lemma.
\end{proof}

\noindent\textit{Concentration.} Using Hoefding's inequality, we prove the following concentration result for $\widehat{H}$.
\begin{lemma}(\textbf{Concentration})
\label{lem:concentration-simple}
For any $\mu>0$,
$ \label{eqn:cnvrge}
\Pr \left( \lvert \bar{H} - \Expc{\bar{H}} \rvert \geq \mu \right) \leq 2\exp \left(- \frac{2R\mu^2}{\log^2 \left(\bSmp+1\right)} \right).$
\end{lemma}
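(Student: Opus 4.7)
The plan is to apply Hoeffding's inequality (Lemma~\ref{lem:hoeff}) directly to the $R$ i.i.d. random variables $\widehat H_1,\ldots,\widehat H_R$ whose average is $\bar H$. The only ingredient I need is a deterministic range $[a,b]$ for each $\widehat H_t$, after which the bound plugs in mechanically.

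First I would identify the range. Because $N_x$ is a count of appearances of $x$ in a block of $\bSmp$ fresh samples, we always have $N_x \in \{0,1,\ldots,\bSmp\}$, so $N_x+1 \in \{1,\ldots,\bSmp+1\}$. Consequently, by line~\ref{entstepalg1} of Algorithm~\ref{alg1},
\[
\widehat H_t \;=\; \log\!\left(\frac{\bSmp}{N_x+1}\right) \;\in\; \left[\log\!\frac{\bSmp}{\bSmp+1},\; \log \bSmp\right].
\]
The length of this interval is $\log \bSmp - \log(\bSmp/(\bSmp+1)) = \log(\bSmp+1)$.

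Next I would invoke independence: the $R$ iterations of the outer loop use disjoint blocks of samples from the stream (one draw for $X$ plus $\bSmp$ samples used to form $N_x$), and the stream is i.i.d.\ from $\p$, so the random variables $\widehat H_1,\ldots,\widehat H_R$ are independent and identically distributed, each supported in an interval of length $\log(\bSmp+1)$.

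Finally I would apply Lemma~\ref{lem:hoeff} with $m=R$, $b_i-a_i = \log(\bSmp+1)$ for every $i$, and $t=\mu$:
\[
\Pr\bigl(|\bar H - \mathbb E[\bar H]| \ge \mu\bigr) \;\le\; 2\exp\!\left(-\frac{2(R\mu)^2}{R\cdot \log^2(\bSmp+1)}\right) \;=\; 2\exp\!\left(-\frac{2R\mu^2}{\log^2(\bSmp+1)}\right),
\]
which is exactly the stated bound. There is no genuine obstacle here; the only mild subtlety is correctly pinning down the range of $\widehat H_t$ (in particular noting that the upper endpoint is $\log \bSmp$ rather than anything involving $p(x)$, since the bound must be a deterministic almost-sure range for Hoeffding to apply), and confirming that the disjoint-block structure of the outer loop gives true independence of the $\widehat H_t$'s.
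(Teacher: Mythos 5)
Your proof is correct and follows essentially the same route as the paper: identify the deterministic range $[\log(N/(N{+}1)),\log N]$ of each $\widehat H_t$, note that the $R$ iterations are i.i.d., and apply Hoeffding's inequality with range length $\log(N{+}1)$. Your explicit remarks on why the endpoints are deterministic and why disjoint blocks give independence are the same observations the paper makes implicitly.
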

\begin{proof}
In each of the $R$ iterations, $N_x$ takes a value in $\{0, \ldots, N\}$. Therefore, for $t=1, \ldots, R$, $\widehat{H}_t\in[\log (N/(N+1)), \log N]$ are i.i.d. random variables. 
By Hoeffding's inequality (Lemma~\ref{lem:hoeff}), 
\begin{equation} \label{cnvrge}
\Pr \left( \lvert \bar{H} - \Expc{\bar{H}} \rvert \geq \frac{\eps}{2} \right) \leq \exp \left(- \frac{\vSmp \eps^2}{2\log^2 \left(\bSmp+1\right)} \right).\qedhere
\end{equation} 
\end{proof}
The choice of $\bSmp$ in Algorithm~\ref{alg1} implies that $\left \lvert \Expc{\bar{H}}-\Ent{p}  \right \rvert \le\eps/2$, and by choosing  $\mu=\eps/2$, and $R =  {4\log^2(1+2k/\eps)}/{\eps^2} $ implies that $\bar H$ is within $H(\p)\pm\eps/2$ with probability at least $2/3$. This gives the total sample complexity of $(\bSmp+1) \vSmp = O\left({k\log^2 \left(k/\eps\right)}/{\eps^3}\right)$.

\section{Interval-based Algorithms}\label{sec:interval}
\new{The algorithm in the previous section treats each symbol equally and uses the same $N$ and $R$.} 
To reduce the sample complexity, our high level approach is the following:
\begin{itemize}
\item Let $T\in\NN$, and $0=\prt_0<\prt_1<\ldots<\prt_T=1$. We design a partition $\mathcal{I} := \{I_1, I_2, ..., I_T\}$ of $\left[0,1 \right]$ into $T$ intervals with $I_j = [a_{T-j}, a_{T-j+1})$. 
\item
We will express the entropy as an expectation of entropy-like terms over these intervals (Lemma~\ref{lem:decom}), and will estimate the contribution from each interval.
\item Consider the $j$th interval, $I_j = [a_{T-j}, a_{T-j+1})$. For $p(x)\in I_j$, 
the number of samples needed to estimate $p(x)$ grows roughly as $1/a_{T-j}$. Therefore, intervals close to zero need more samples, and intervals far from zero require fewer samples.  
\item
Note that for $p(x)\in I_j, \log (1/p(x)) \in (\log (1/a_{T-j+1}), \log (1/a_{T-j})]$. We choose the intervals such that this width decreases for intervals close to zero. In doing so, we will ensure that while more samples are needed to estimate the probability values in these intervals, we need fewer iterations (small $R$) to estimate the contribution of these intervals to entropy. 
\end{itemize}

\begin{figure}[H]
\begin{center}
\begin{tikzpicture}[xscale=12]
\draw[-][ultra thick] (0,0) -- (1,0);
\draw [ultra thick] (0,-.1) node[below]{$a_0 = 0$} -- (0,0.1);
\draw [ultra thick] (0.166,-.1) node[below]{$a_1$} -- (0.166,0.1);
\node[below] at (0.333,-.1) {$\cdots$};
\draw [ultra thick] (0.50,-.1) node[below]{$a_{T-j}$} -- (0.50,0.1);
\draw [ultra thick] (0.666,-.1) node[below]{$a_{T-j+1}$} -- (0.666,0.1);
,\draw [ultra thick] (0.833,-.1) node[below]{$a_{T-1}$} -- (0.833,0.1);
\draw [ultra thick] (1,-.1) node[below]{$a_T = 1$} -- (1,0.1);
\node[above] at (0.08,0.2) {$I_T$};
\node[above] at (0.57,0.2) {$I_j$};
\node[above] at (0.90,0.2) {$I_1$};
\end{tikzpicture}
\caption{Partition of $\left[0,1\right]$ into $T$ intervals}
\end{center}
\end{figure}
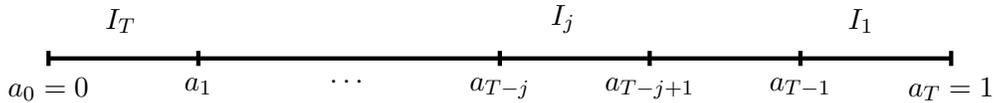
\noindent In Lemma~\ref{lem:decom} we express entropy as a summation over the contributions from the various intervals. 

\noindent\textbf{Entropy as a Weighted Sum of Conditional Expectations.}\label{subsec:condexp}
Ideally, our approach (similar to the previous section) would be the following. Obtain a symbol in the data stream, find which interval it lies in, and estimate the entropy contribution from each interval separately. However, we may not be able to estimate the exact interval that a symbol is in. To take care of this, consider a randomized algorithm $\mathcal{A}:[k]\to\{I_1,\ldots, I_T\}$ that takes as input $x\in[k]$, and outputs an interval in $\cI$ (which corresponds to our guess for the interval in which $p(x)$ lies in). For a symbol $x$, let $p_{\cA}\left(I_j \middle| x \right) := \probof{\cA(x) = I_j}$ be the distribution of the output of $\mathcal{A}$ for a symbol $x$. For a distribution $\p\in\Delta_k$, let
\begin{align} \label{eq:conddist}
p_\cA\Paren{I_j} := \sum_{x\in [k]}\p(x)\cdot p_{\cA}\left(I_j \middle| x \right), && p_\cA\left(x \middle| I_j \right) := \frac{\p(x)\cdot p_{\cA}\left(I_j \middle| x \right)}{p_\cA\left(I_j \right) }.
\end{align}
Then $p_\cA\Paren{I_j}$ is the probability that $\cA(X)=I_j$, when $X\sim\p$. $p_\cA\left(x \middle| I_j \right)$ is the distribution over $[k]$ conditioned on $\cA(X)= I_j$. . 

For any randomized function $\mathcal{A}:[k]\to\{I_1,\ldots, I_T\}$ we can characterize the entropy as follows. 
\begin{lemma} \label{lem:decom}
	Let $ H_j := \EE _{X \sim p_\cA \left(x \middle| I_j \right)}\left[ -\log{p\left(X\right)}  \right]$ then, $\Ent{p} = \sum_{j=1}^{T} p_\cA\left(I_j \right)  H_j.$ \label{entintexp}
\end{lemma}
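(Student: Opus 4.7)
The identity is essentially the law of total expectation applied to the auxiliary randomization $\mathcal{A}$, so I would simply expand both sides using the definitions in~\eqref{eq:conddist} and swap the order of summation. First I would substitute the definition of $H_j$ into the right-hand side to obtain
\[
\sum_{j=1}^{T} p_\cA(I_j)\, H_j \;=\; \sum_{j=1}^{T} p_\cA(I_j)\, \sum_{x \in [k]} p_\cA(x \mid I_j)\,\bigl(-\log p(x)\bigr).
\]
Next I would plug in $p_\cA(x \mid I_j) = p(x)\, p_\cA(I_j \mid x)/p_\cA(I_j)$, cancel the factor $p_\cA(I_j)$, and interchange the two finite sums to get
\[
\sum_{x \in [k]} p(x)\,\bigl(-\log p(x)\bigr)\, \sum_{j=1}^{T} p_\cA(I_j \mid x).
\]

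The only remaining step is to observe that for every $x$, the distribution $p_\cA(\,\cdot\,\mid x)$ over the intervals $I_1,\dots,I_T$ is a probability distribution, since $\mathcal{A}(x)$ always outputs some interval in $\cI$. Hence $\sum_{j=1}^{T} p_\cA(I_j \mid x) = 1$, and the right-hand side collapses to $\sum_x p(x)(-\log p(x)) = H(p)$.

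There is really no obstacle here: the lemma is a bookkeeping identity rather than a quantitative statement, and one only needs to be careful about the edge case $p_\cA(I_j) = 0$. In that case the conditional distribution $p_\cA(x \mid I_j)$ is undefined, but such intervals contribute $0 \cdot H_j$ to the sum and can be omitted (equivalently, define $H_j := 0$ for such $j$), so the derivation above goes through unchanged.
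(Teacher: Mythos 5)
Your proof is correct and is essentially the paper's argument run in the reverse direction: the paper starts from $H(p)$, inserts $\sum_j p_\cA(I_j \mid x)=1$, and regroups using $p(x)\,p_\cA(I_j\mid x)=p_\cA(I_j)\,p_\cA(x\mid I_j)$, while you start from $\sum_j p_\cA(I_j)H_j$ and unwind the same two identities. Your remark about the degenerate case $p_\cA(I_j)=0$ is a small but genuine improvement in rigor that the paper leaves implicit.
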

\begin{proof}
\begin{align}
\Ent{\p}  = \sum_x \p(x) \Paren{\sum_j p_{\cA}\left(I_j \middle| x \right)}\log \frac1{\p(x)} &= \sum_x  \sum_j \Paren{p_\cA\left(I_j \right) p_\cA\left(x \middle| I_j \right) \log \frac1{\p(x)}}\label{eqn:mid-step-ent-decomp}\\
&= \sum_j p_\cA\Paren{I_j} \Paren{ \EE _{X \sim p_\cA \left(x \middle| I_j \right)}\left[ -\log{p\left(X\right)}  \right]}.\nonumber
\end{align}	
where~\eqref{eqn:mid-step-ent-decomp} follows from~\eqref{eq:conddist}. 
\end{proof}
Suppose $\cA$ is such that it outputs the exact interval in which $p(x)$ is in, then $p_\cA\Paren{I_j}=p(I_j)$, the total probability of interval $I_j$, and $p_\cA(x|I_j)$ is the conditional distribution of all symbols in $I_j$. In this case, the lemma above reduces to the grouping property of entropy~\cite{CoverT06}. In our streaming setting, the algorithm $\cA$ will take as input an element $x$ of the data stream, and then based on the number of occurrences of $x$ over a window of certain size in the subsequent stream outputs an interval in $\cI$. 

We will choose the intervals and algorithm $\cA$ appropriately. By estimating each term in the summation above, we will design an algorithm with $T$ intervals that uses $O\Paren{\frac{k (\log^{(T)} k + \log(1/\eps))^2}{\eps^3}}$ 
 samples and a constant words of space, and estimates $H(\p)$ to $\pm\eps$ with probability at least $2/3$. Here $\log ^{(T)}$ denotes the iterated logarithms, and therefore shows the improvement in logarithmic terms as $T$ grows. 

In Section~\ref{sec:two_interval}, we provide the details with $T=2$. 
This section will flesh out the key arguments, and show how to reduce the $\log k$ term in the previous section to $\log \log k$. Finally in Section~\ref{sec:general_interval}, we extend this to $T = \log^* k$ where $\log^* k= \min_i \{\log^{(i)} k \le 1\} $ intervals to further reduce the sample complexity to $O({k (\log (1/\eps))^2}/{\eps^3})$.

\subsection{Two Intervals Algorithm ($T=2$)} \label{sec:two_interval}

We propose Algorithm~\ref{alg:twoint} with the following guarantee. 
\begin{theorem} \label{thm:perf_two}
	Algorithm~\ref{alg:twoint} uses 
$
		O(NR + N_1R_1 + N_2R_2) = O\left( \frac{k (\log \left(\log (k)/\eps\right))^2}{\eps^3}\right)
$
	samples, $20$ words and outputs an $\pm\eps$ estimate of $H(p)$ with probability at least $2/3$.
\end{theorem}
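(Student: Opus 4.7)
The starting point is the entropy decomposition of Lemma~\ref{lem:decom} with $T=2$: writing $\cI = \{I_1, I_2\}$ with $I_1 = [a_1, 1]$ and $I_2 = [0, a_1)$ for a threshold $a_1$ to be chosen, and letting $\cA$ be the (randomized) classifier induced by Algorithm~\ref{alg:twoint}, one has $H(p) = p_\cA(I_1) H_1 + p_\cA(I_2) H_2$. It suffices to estimate each of the weights and the two interval entropies within $O(\eps)$, since the bilinear product is then accurate to $\pm\eps$; note that $p_\cA(I_2) = 1 - p_\cA(I_1)$ comes for free, and the single weight $p_\cA(I_1)$ is estimated by the empirical fraction of iterations in which $\cA$ returns $I_1$. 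The classifier itself is implemented as a thresholded variant of Algorithm~\ref{alg1}: draw a driving sample $x$, form an empirical frequency $\widehat p(x)$ from the next $N$ symbols, and declare $\cA(x) = I_1$ iff $\widehat p(x)$ exceeds a threshold near $a_1$.

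The two interval entropies $H_j$ are estimated in independent phases, each structured like Algorithm~\ref{alg1}: in iteration $t$ of the $I_j$-phase, draw $x$, verify via one $N$-window that $\cA(x) = I_j$, then use a further $N_j$-window to form $\log((N_j+1)/(N_{j,x}+1))$, and average over $R_j$ outer iterations. The bias inside $I_j$ is controlled by a line-by-line repeat of Lemma~\ref{lem:mdlbias}, yielding $|\EE[\widehat H_j] - H_j| \le k_j/N_j$, where $k_j$ is the (random) support size of $\cA(\cdot) = I_j$. For $I_1$, classified symbols satisfy $\widehat p \gtrsim a_1$, hence $k_1 = O(1/a_1)$ with high probability and $N_1 = O(1/(a_1\eps))$ suffices; for $I_2$ we only have $k_2 \le k$, forcing $N_2 = O(k/\eps)$. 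Concentration of the outer averages is handled by Hoeffding (Lemma~\ref{lem:hoeff}), together with the Random Hoeffding variant (Lemma~\ref{lem:ranhoeff}) for the $I_j$-phase where only a $\mathrm{Bin}(R_j, p_\cA(I_j))$-random number of iterations actually contribute. The relevant range of summands is the width of $\log(1/\widehat p)$ after add-$1$ smoothing, which is $O(\log(1/a_1))$ inside $I_1$ and $O(\log(N_2 a_1))$ inside $I_2$.

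Choosing $a_1 = 1/\log k$ and truncating probabilities below $\eps/k$ (which contribute only $O(\eps)$ to $H(p)$), the effective $I_2$-range shrinks to $O(\log(\log k/\eps))$. Setting each $R_j$ proportional to its squared range over $\eps^2$ then gives $N_1 R_1 = \widetilde O(\log k / \eps^3)$ and $N_2 R_2 = O(k(\log(\log k/\eps))^2/\eps^3)$, while the classification budget $NR$ is of the same or lower order, producing the claimed total sample bound. The space accounting follows the line-by-line argument used for Algorithm~\ref{alg1}: a constant number of stored symbols, counters, and running sums, each bounded polynomially in $k$ and $1/\eps$ and thus representable in $O(1)$ words; careful bookkeeping yields $20$.

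The main technical obstacle lies in the $I_2$-phase, where two sources of randomness --- the event $\cA(x) = I_2$ and the subsequent $N_2$-window estimation --- must be handled jointly. Lemma~\ref{lem:ranhoeff} is the right tool, but some care is needed to ensure that $p_\cA(I_2)$ appears correctly in both the effective iteration count and the deviation scale, and that the product with the estimated weight $\widehat{p_\cA(I_2)}$ still comes out to $\pm\eps$ overall. A secondary concern is the behaviour of symbols whose true probability sits near the boundary $a_1$; such symbols are misclassified with constant probability, but because Lemma~\ref{lem:decom} is phrased in terms of the conditional law $p_\cA(x \mid I_j)$ rather than true interval membership, this effect is automatically absorbed into $H_j$ and requires only that the bias bound $k_j/N_j$ continue to apply under the conditional distribution.
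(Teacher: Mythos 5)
Your high-level structure matches the paper's---Lemma~\ref{lem:decom} with $T=2$, an $\textsc{EstInt}$-style classifier, separate $(N_j, R_j)$ budgets per interval, clipping, and Random Hoeffding to handle the binomially random number of contributing iterations---but the threshold choice $a_1 = 1/\log k$ is wrong and it breaks the arithmetic at the decisive step. Your own formula says the $I_2$ summand range after clipping is $O(\log(N_2 a_1))$. With $N_2 = \Theta(k/\eps)$ and $a_1 = 1/\log k$ this is $\log\bigl(k/(\eps\log k)\bigr) = \Theta(\log(k/\eps))$, \emph{not} $O(\log(\log k/\eps))$: truncating tiny probabilities below $\eps/k$ only caps $\log(1/\widehat p)$ at $\log(k/\eps)$, while the lower clip at $\log(1/a_1) = \log\log k$ leaves a window of width $\Theta(\log k)$. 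Consequently $R_2$ must be $\Theta(\log^2(k/\eps)/\eps^2)$ and $N_2 R_2 = \Theta(k\log^2(k/\eps)/\eps^3)$, no better than the single-interval Algorithm~\ref{alg1}. The claimed sample bound does not follow.

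The paper avoids this by putting the threshold near $1/k$, not near $1/\log k$: it takes $\ell = (\log k)^\beta/k$ with $\beta > 16$ (see~\eqref{eqn:two-ints} and Step~\ref{step:params} of Algorithm~\ref{alg:twoint}). Then the lower clip $\log(1/4\ell) \approx \log k - \beta\log\log k$ sits just below the upper end $\log N_2 \approx \log(k/\eps)$, so the $I_2$ window really does have width $O(\log(\log k/\eps))$ and $R_2 = O(\log^2(\log k/\eps)/\eps^2)$. The price is that $I_1 = [\ell,1]$ now has a wide $\log(1/p)$ range of $\Theta(\log k)$, forcing $R_1 = \Theta(\log^2(k/\eps)/\eps^2)$---but this is compensated because the effective support of $I_1$ is only $O(1/\ell) = O(k/(\log k)^\beta)$, so $N_1 = O\bigl(k/(\eps(\log k)^{\gamma})\bigr)$ (with $\gamma = \beta/2$, via the Chernoff bound in the proof of Lemma~\ref{lem:bias_two}) and $N_1 R_1$ is lower order. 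Your plan gives $I_1$ \emph{both} few symbols and a narrow range, which is the easy direction, but leaves $I_2$ with many symbols \emph{and} a wide range, which is exactly the bottleneck that the interval construction was meant to break. Adopting $\ell = \mathrm{polylog}(k)/k$ in place of $1/\log k$---and replacing your informal ``$k_1 = O(1/a_1)$ with high probability, hence $N_1 = O(1/(a_1\eps))$'' by the weighted bias bound $\sum_x p_{\cA}(I_1\,|\,x)/N_1$ of Lemma~\ref{lem:bias_two}, which needs the Chernoff tail for symbols with $p(x) < \ell/2$---would repair the argument and line it up with the paper.
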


\subsubsection{Description of the Algorithm}
Let $T=2$, and $\beta>16$ be a constant. Consider the following partition of $\left[0 ,1\right]$:
\begin{equation}
I_2 = \left[0,\ell \right), I_1 = \left[\ell,1\right] \hspace{5mm} \text{where} \hspace{5mm} \ell = {\left(\log k \right)^{\beta}}/{k}.\label{eqn:two-ints}
\end{equation}

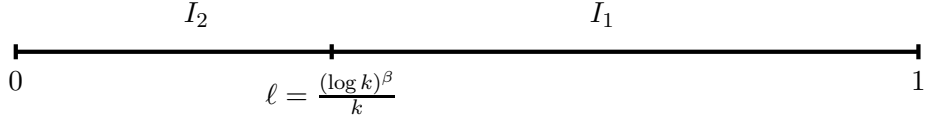
\begin{figure}[H]
\begin{center}
\begin{tikzpicture}[xscale=12]
\draw[-][ultra thick] (0,0) -- (1,0);
\draw [ultra thick] (0,-.1) node[below]{$0$} -- (0,0.1);
\draw [ultra thick] (0.35,-.1) node[below]{$\ell = \frac{(\log k)^{\beta}}{k}$} -- (0.35,0.1);
\draw [ultra thick] (1,-.1) node[below]{$1$} -- (1,0.1);
\node[above] at (0.2,0.2) {$I_2$};
\node[above] at (0.65,0.2) {$I_1$};
\end{tikzpicture}
\caption{Partition of $\left[0,1\right]$ into two intervals}
\end{center}
\end{figure}

We now specify the algorithm $\cA:[k]\to\{I_1, I_2\}$ we used in Lemma~\ref{lem:decom}. $\cA$ is denoted by $\textsc{EstInt}$ (Algorithm~\ref{alg:intvlesti}). 
For $x\in[k]$, it takes $\bSmp$ samples from $\p$, and if the fraction of occurrences of $x$ is more than $\ell$ it outputs $I_1$, else it outputs $I_2$. This is an algorithm that tries to guess the true interval containing $p(x)$ from the samples.
\begin{table}[ht]
	\begin{tabular}{cc}
		\begin{minipage}[t]{0.4\textwidth}
		\begin{algorithm}[H]
		\caption{$\cA$ : $\textsc{EstInt} \left(\bSmp, x\right)$} \label{alg:intvlesti}
		\begin{algorithmic}[1]
		\State Obtain $\bSmp$ samples from $p$ 
		\State { \textbf{if} $x$ appears $\geq \bSmp\ell$ times, output $I_1$}
		\State { \textbf{else} output $I_2$}
		\end{algorithmic}
		\end{algorithm} 
		\end{minipage}
		&
		\begin{minipage}[t]{0.5\textwidth}
		\begin{algorithm}[H]
		\caption{ $\textsc{EstProbInt}\left(\bSmp, \vSmp\right)$} \label{alg:est_p1}
		\begin{algorithmic}[1]
		\State $\widehat{p}_\cA\left(I_1 \right) = 0$
		\For{$t = 1 \text{ to } \vSmp$}
		\State Sample $x \sim \p$.
		\If {$\textsc{EstInt}\left(\bSmp, x\right) = I_1$}
		 \State $\widehat{p}_\cA\left(I_1 \right)= \widehat{p}_\cA\left(I_1 \right) + {1}/{\vSmp}$
		 \EndIf
		\EndFor
		\end{algorithmic}
		\end{algorithm} 
		\end{minipage}
\end{tabular}
\end{table}

By Lemma~\ref{lem:decom}, $
	\Ent{p} =  p_\cA\left(I_1 \right)  H_1 +p_\cA\left(I_2 \right)  H_2.$
We estimate the terms in this expression as follows. 

\medskip
\noindent\textbf{Estimating $p_\cA(I_j)$'s.} We run $\textsc{EstInt}$ multiple times on samples generated from $\p$, and output the fraction of times the output is $I_j$ as an estimate of $\p_\cA(I_j)$. We only estimate $p_\cA(I_1)$, since $p_\cA(I_1)+p_\cA(I_2)=1$. The complete procedure 
is specified in Algorithm~\ref{alg:est_p1}.

\medskip
\noindent\textbf{Estimating $H_j$'s.} Recall that $H_j$'s are the expectations of $-\log \left({p(x)}\right)$ under different distributions given in~\eqref{eq:conddist}. 
Since the expectations are with respect to the conditional distributions, we first sample a symbol from $p$ and then conditioned on the event that $\textsc{EstInt}$ outputs $I_j$, we use an algorithm similar to Algorithm~\ref{alg1} to estimate $\log (1/\p(x))$. The complete procedure is given in Algorithm~\ref{alg:hi_est2}. Notice that when computing $\widehat{H}_2$ in Step~\ref{step:clip}, we clip the $\widehat{H}_2$'s to $\log \frac{1}{4\ell}$ if $N_{x,2} > 4\ell \bSmp_2 -1$. This is done to restrict each $\widehat{H}_2$ to be in the range of $[\log \frac{1}{4\ell}, \log \bSmp_2]$, which helps in obtaining the concentration bounds by bounding the width of the interval for applying Hoeffding's inequality. 
\begin{algorithm}[ht]
	\caption{Estimating $H_1$ and $H_2$ : $\textsc{CondExp}\left(\bSmp_1, \bSmp_2, \vSmp_1, \vSmp_2\right)$} \label{alg:esthi}
	\label{alg:hi_est2}
	\begin{algorithmic}[1]
		\For{$i=1,2$, set $\widehat{H}_i = 0, S_i = 0$, }
		\For{$t=1 \text{ to } R_i$}
		\State Sample $x \sim \p$
		\If{$\textsc{EstInt}\left(\bSmp, x\right)=I_i$} \label{lst:line:esthicond}
		\State $S_i = S_i + 1$
		\State Let $N_{x,i} \leftarrow$ \# occurrences of $x$ in the next $\bSmp_i$ samples  \label{computehl2}
		\State $\widehat{H}_i = \widehat{H}_i + \log \left(\bSmp_i/(N_{x,i} + 1) \right)$ if $i = 1$
		\State $\widehat{H}_i = \widehat{H}_i + \max \left\{\log \left(\bSmp_i/(N_{x,i} + 1)\right), \log \Paren{1/4\ell} \right\} $ if $i = 2$ \label{step:clip}
		\EndIf
		\EndFor
		\State $\bar{H}_i =\widehat{H}_i/S_i$
		\EndFor
	\end{algorithmic}
\end{algorithm} 

\begin{algorithm}[t]
	\caption{Entropy Estimation with constant space: Two Intervals Algorithm} \label{alg:twoint}
	\begin{algorithmic}[1]
	\Require Accuracy parameter $\eps>0, \gamma=\beta/2$, a data stream $X_1, X_2, \ldots\sim p$
		\State Set 
		\[
		\bSmp = \bSmp_1 = \frac{C_1 k}{\eps \left( \log k \right)^{\gamma} },\ \vSmp = \vSmp_1 = C_2 \frac{\log(k/\eps)^2 }{\eps^2},\ 
		\bSmp_2 = C_1 \cdot \frac{k}{\eps},\  \vSmp_2 = C_2 \cdot \frac{\left(\log ((\log k)/\eps)\right)^2}{\eps^2}
		\] \label{step:params}
		\State $\widehat{p}_\cA\left(I_1 \right)= \textsc{EstProbInt}\left(\bSmp,\vSmp \right)$
		\State $\bar{H}_1, \bar{H}_2 = \textsc{CondExp}\left(\bSmp_1, \bSmp_2, \vSmp_1, \vSmp_2 \right) $ \label{step:est_ent}
		\State $\estII = \widehat{p}_\cA\left(I_1 \right)\bar{H}_1 + (1-\widehat{p}_\cA\left(I_1 \right))\bar{H}_2  $
	\end{algorithmic}
\end{algorithm}


\subsubsection{Performance Guarantees}

\textbf{Memory Requirements.} 
Since $R_1,R_2,N_1,N_2$ and $\ell$ are program constants, we compute them on execution
by storing $k$, $\eps, \beta, C$  using four words in total. For simplicity we set $C_1 = C_2 = C$.
\begin{itemize}
\item \textsc{EstInt} uses two words for the counter and two words to keep track of number of appearances of $x$ since $N \le k^2/\eps^2$. These four words are reused on each invocation of  \textsc{EstInt}.
 \item \textsc{EstProbInt} uses three words to store the counter $t$, one word to store $x$ and three words to store the final output since $R \le k^3/\eps^3$.
\item \textsc{CondExp} is executed for each interval separately which allows reusing the memory required for one iteration. We can use the memory reserved for \textsc{EstProbInt} to store the counter $t$ and the sample $x$. $N_{x,i}$'s can be stored in the memory reserved for \textsc{EstInt}. Variables $S_i, \hhat_i$ requires three and two words respectively. The final answer is stored in the memory allocated to $ \hhat_i$.
\end{itemize}
Hence, at most 20 words of memory are sufficient.

\medskip
\noindent \textbf{Sample Complexity.}
Define Algorithm~\ref{alg:esthi}$^*$ to be a modified version of Algorithm~\ref{alg:esthi} with Step~\ref{step:clip} being 
\[
	\widehat{H}_i = \widehat{H}_i + \log \left(\bSmp_i/(N_{x,i} + 1)\right)
\]
(i.e. without clipping from below) and all other steps remaining the same. Let $\estIIs$ be the output of Algorithm~\ref{alg:twoint} by replacing Step~\ref{step:est_ent} with estimates of Algorithm~\ref{alg:esthi}$^*$. Then we can bound the estimation error by the following three terms and we will bound each of them separately,
\[
	 \left \lvert \Ent{p} -  \estII \right \rvert \le \underbrace{ \left \lvert \Ent{p} -  \Expc{\estIIs} \right \rvert}_\text{Unclipped Bias} + \underbrace{ \left \lvert \Expc{\estII} -  \Expc{\estIIs} \right \rvert}_\text{Clipping Error} + \underbrace{\left \lvert \estII -  \Expc{ \estII}  \right \rvert.
}_\text{Concentration}
\]

\noindent \textbf{Clipping Error.} 
By the design of \textsc{CondExp}, $\widehat{H}_2$ is clipped only when the event $ \cE_x= \{ \textsc{EstInt}(\bSmp,x) = I_2, N_{x,2} > 4\bSmp_2 \ell - 1\}$ occurs for some $x \in [k]$ . We bound the clipping error in the following lemma (proof in Section~\ref{sec:pfcliptwoint}) by showing that $\probof{\cE_x}$ is small.  
\begin{lemma}(\textbf{Clipping Error Bound})\label{lem:cliptwoint}
	Let  $\estII$ be the entropy estimate of Algorithm~\ref{alg:twoint} and let $\estIIs$ be the entropy estimate of the unclipped version of Algorithm~\ref{alg:twoint}. Then 
	$\left \lvert  \Expc{\estII} -  \Expc{\estIIs} \right \rvert \leq {\eps}/{3}.$
\end{lemma}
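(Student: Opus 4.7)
The plan is to reduce the expected clipping error to the probability of the clipping event $\cE_x$, and then show via Chernoff bounds that this probability is exponentially tiny in $(\log k)^\gamma/\eps$.

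First I would use that the samples consumed by $\textsc{EstProbInt}$ (which determine $\widehat{p}_\cA(I_1)$) are disjoint from, hence independent of, those used inside $\textsc{CondExp}$ (which determine $\bar{H}_2$ and $\bar{H}_2^*$). Since $\bar{H}_1$ is identical in the two variants, only the $\bar{H}_2$ term contributes, and
\[
\Expc{\estII - \estIIs} \;=\; \Expc{1-\widehat{p}_\cA(I_1)}\cdot \Expc{\bar{H}_2 - \bar{H}_2^*} \;=\; p_\cA(I_2)\cdot \Expc{\bar{H}_2 - \bar{H}_2^*}.
\]

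Next I would rewrite $\bar{H}_2 - \bar{H}_2^*$ as the sample average, over the $S_2$ ``successful'' iterations (those with $\textsc{EstInt}=I_2$), of the per-iteration clip amount
\[
V_t := \indctr{\cE_{x_t}}\bigl(\log(1/(4\ell)) - \log(\bSmp_2/(N_{x_t,2}+1))\bigr).
\]
Clipping only inflates the estimator, so $V_t\ge 0$; moreover $V_t = O(\log k)$ pointwise since $\log(\bSmp_2/(N_{x_t,2}+1))\ge -\log 2$. Conditional on $S_2=s>0$, the $s$ successful iterations are i.i.d.\ with $x_t \sim p_\cA(\cdot\mid I_2)$, so $\Expc{\bar{H}_2 - \bar{H}_2^* \mid S_2=s} = \Expc{V_t}/p_\cA(I_2)$. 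Since $\bar{H}_2 = \bar{H}_2^* = 0$ whenever $S_2=0$ (by convention), averaging yields
\[
p_\cA(I_2)\cdot \Expc{\bar{H}_2 - \bar{H}_2^*} \;\le\; \Expc{V_t} \;\le\; O(\log k)\cdot \Pr(\cE), \qquad \Pr(\cE) := \sum_{x\in[k]} p(x)\Pr(\cE_x).
\]

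It remains to bound $\Pr(\cE_x)$ for each $x$. The two sub-events defining $\cE_x$---namely $\textsc{EstInt}(\bSmp,x)=I_2$ based on $\bSmp$ stream samples, and $N_{x,2}>4\bSmp_2\ell-1$ based on $\bSmp_2$ other stream samples---use disjoint portions of the stream and so are independent. Split by $p(x)$: if $p(x)\le 2\ell$ then $\Expc{N_{x,2}}\le 2\bSmp_2\ell$, and a multiplicative Chernoff bound (with $\delta\ge 1$) gives $\Pr(N_{x,2}>4\bSmp_2\ell-1)\le \exp(-\Omega(\bSmp_2\ell))$; if $p(x)>2\ell$ then $\Expc{N_x}>2\bSmp\ell$ while $\textsc{EstInt}=I_2$ requires the empirical count to drop below $\bSmp\ell$, which again by Chernoff has probability $\exp(-\Omega(\bSmp\ell))$. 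Substituting $\bSmp = C_1 k/(\eps(\log k)^\gamma)$, $\bSmp_2 = C_1 k/\eps$ and $\ell=(\log k)^\beta/k$ with $\beta=2\gamma$, both tails are $\exp(-\Omega(C_1(\log k)^\gamma/\eps))$. Multiplying by $O(\log k)$ yields a bound well below $\eps/3$ once $C_1$ and $\gamma$ are sufficiently large (already guaranteed since $\beta>16$).

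The main obstacle I anticipate is the ratio reduction $\Expc{(\sum_t V_t)/S_2}\le \Expc{V_t}/p_\cA(I_2)$: since $S_2$ is random and can in principle be zero, one must condition on $S_2=s>0$ and exploit that the successful iterations are conditionally i.i.d.\ with $x_t\sim p_\cA(\cdot\mid I_2)$, absorbing the $\{S_2=0\}$ case by noting that both estimators agree there. Once this decomposition is in place, the remainder of the argument is routine Chernoff estimation.
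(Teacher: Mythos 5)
Your argument is correct and follows essentially the same route as the paper's proof: reduce via independence to $p_{\cA}(I_2)\,\lvert\Expc{\bar H_2}-\Expc{\bar H_2^*}\rvert$, bound the per-iteration clip gap pointwise by $O(\log k)$, and apply Chernoff bounds to $\Pr(\cE_x)$ after splitting on whether $p(x)$ is above or below $2\ell$, using that the two sub-events of $\cE_x$ are determined by disjoint stream windows and hence independent. You are somewhat more explicit than the paper about passing from $\widehat H_2/S_2$ to $\Expc{V_t}/p_{\cA}(I_2)$ and about the $S_2=0$ corner case, which the paper absorbs implicitly through the conditional-expectation identity~\eqref{eq:condent}.
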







\medskip
\noindent\textbf{Concentration Bound.} To prove the concentration bound, we use Lemma~\ref{lem:concentration} to decompose it into three terms. Each of them can be viewed as the difference between an empirical mean and its true expectation, which can be bounded using concentration inequalities. (proof in Section~\ref{sec:pfconctwoint})


\begin{lemma}(\textbf{Concentration Bound}) \label{lem:concentration}
Let $\estII$ be the entropy estimate of Algorithm~\ref{alg:twoint} and let $\bar{H}_i$ be as defined in Algorithm~\ref{alg:twoint}. Let $p_{\cA}\left(I_i \right)$ be the distribution defined in \eqref{eq:conddist} where $\cA$ is \textsc{EstInt}. Then,
	 \begin{align}
	\left \lvert \Expc{\estII} - \estII \right \rvert \leq \sum\limits_{i=1}^{2}  p_{\cA}\left(I_i\right) \left \lvert \bar{H_i}  - \Expc{\bar{H_i}} \right \rvert + |p_{\cA}\left(I_1\right) - \widehat{p}_{\cA} \left(I_1\right) | |\bar{H}_1 - \bar{H}_2 | \leq {\eps}/{3} \nonumber. 
	\end{align}
\end{lemma}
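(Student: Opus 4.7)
The lemma comprises two inequalities—an algebraic decomposition followed by a concentration bound—which I would establish separately.

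For the first inequality, the plan is to observe that the samples used by $\textsc{EstProbInt}$ (producing $\widehat{p}_\cA(I_1)$) are disjoint from those used by $\textsc{CondExp}$ (producing $\bar{H}_1,\bar{H}_2$), so $\widehat{p}_\cA(I_1)$ is independent of $(\bar{H}_1,\bar{H}_2)$. Since each iteration of $\textsc{EstProbInt}$ contributes a Bernoulli indicator of mean $p_\cA(I_1)$ to $\widehat{p}_\cA(I_1)$, we have $\Expc{\widehat{p}_\cA(I_1)}=p_\cA(I_1)$, and hence
\[
\Expc{\estII}= p_\cA(I_1)\,\Expc{\bar H_1}+p_\cA(I_2)\,\Expc{\bar H_2}.
\]
Adding and subtracting $p_\cA(I_i)\bar H_i$ in the identity for $\estII-\Expc{\estII}$ and using $\widehat{p}_\cA(I_1)-p_\cA(I_1)=-(\widehat{p}_\cA(I_2)-p_\cA(I_2))$, I would rewrite
\[
\estII-\Expc{\estII}=(\widehat{p}_\cA(I_1)-p_\cA(I_1))(\bar H_1-\bar H_2)+\sum_{i=1}^{2} p_\cA(I_i)(\bar H_i-\Expc{\bar H_i}),
\]
and the triangle inequality yields the first inequality.

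For the second inequality, I would bound the three stochastic pieces separately and then take a union bound. First, the per-iteration summands in $\widehat H_i$ (Algorithm~\ref{alg:hi_est2}) lie in bounded ranges: for $i=1$ the summand is in $[\log(N_1/(N_1+1)),\log N_1]$ of width $O(\log(k/\eps))$, while the clipping in Step~\ref{step:clip} confines the $i=2$ summand to $[\log(1/(4\ell)),\log N_2]$ of width $\log(4\ell N_2)=O(\log((\log k)/\eps))$. Since the number of contributions $S_i$ is distributed as $\mathrm{Bin}(R_i,p_\cA(I_i))$, I would apply Random Hoeffding (Lemma~\ref{lem:ranhoeff}) with $t=\eps/12$ to bound the event $\{p_\cA(I_i)|\bar H_i-\Expc{\bar H_i}|\ge \eps/12\}$ by $3\exp(-R_i\eps^2/(C\,p_\cA(I_i)\,(b_i-a_i)^2))$. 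The choices of $R_1,R_2$ in Step~\ref{step:params} make this at most $1/20$ for a sufficiently large constant $C_2$, using $p_\cA(I_i)\le 1$. For the cross term I would bound $|\bar H_1-\bar H_2|=O(\log(k/\eps))$ deterministically and apply standard Hoeffding (Lemma~\ref{lem:hoeff}) to the $R$ i.i.d.\ Bernoulli$(p_\cA(I_1))$ indicators composing $\widehat{p}_\cA(I_1)$, yielding $|\widehat{p}_\cA(I_1)-p_\cA(I_1)|\le \eps/(12\log(k/\eps))$ with probability $\ge 19/20$ when $R=\Theta(\log^2(k/\eps)/\eps^2)$. A union bound combines the three events and delivers the final $\eps/3$ bound.

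The main obstacle is that $\bar H_i=\widehat H_i/S_i$ is an average over a \emph{random} number of terms, so standard Hoeffding does not apply directly; this is exactly why Lemma~\ref{lem:ranhoeff} is needed. A more subtle issue is that, without the clipping in Step~\ref{step:clip}, the $i=2$ summand would have range up to $\log N_2=\Theta(\log(k/\eps))$, which would force $R_2$ to scale like $R_1$ and inflate the sample complexity back to the $\log^2(k/\eps)/\eps^2$ regime of the simple algorithm. Controlling the interplay between the Random Hoeffding variance factor $(b_i-a_i)^2$ and the clipping bias (which is handled separately in Lemma~\ref{lem:cliptwoint}) is exactly what allows the two-interval scheme to save a $\log k$ factor over Algorithm~\ref{alg1}.
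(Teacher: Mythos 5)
Your proposal matches the paper's proof almost exactly: the same algebraic decomposition (adding and subtracting $p_\cA(I_i)\bar H_i$ and collapsing the cross terms via $p_\cA(I_1)-\widehat p_\cA(I_1)=-(p_\cA(I_2)-\widehat p_\cA(I_2))$), standard Hoeffding for $\widehat p_\cA(I_1)$, and Random Hoeffding (Lemma~\ref{lem:ranhoeff}) for the two $\bar H_i$ terms with the ranges $O(\log(k/\eps))$ and $O(\log((\log k)/\eps))$ determined by the clipping, followed by a union bound. The only differences are the cosmetic constants ($\eps/12$ and failure probability $1/20$ per event versus the paper's $\eps/9$ and $1/9$), which are immaterial.
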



\newer{We provide a brief outline of the proof below. By the union bound, in order to show that with probability at least $2/3$ the sum is less than $\eps/3$, it is sufficient to show that with probability at most $1/9$, each of the terms is greater than $\eps/9$.

To bound $|p_{\cA}\left(I_1\right) - \widehat{p}_{\cA} \left(I_1\right) | |\bar{H}_1 - \bar{H}_2 |$, we first bound the range of $|\bar{H}_1 - \bar{H}_2 |$ and then use Hoeffding's inequality (Lemma~\ref{lem:hoeff}) to obtain concentration of $\widehat{p}_{\cA} \left(I_1\right)$. To bound  $\left \lvert \bar{H_i}  - \Expc{\bar{H_i}} \right \rvert$, note that we cannot obtain concentration using Hoeffding's inequality because $\vSmp_i$ (the number of samples that we average over) is a random variable. Therefore, we apply Random Hoeffding inequality (Lemma~\ref{lem:ranhoeff}) to $\bar{H_i}$. Since $\vSmp_i$ depends on the range of the random variables being averaged over, we obtain a reduction in the sample complexity for $i=2$ because of clipping the estimate below to $\log \frac{1}{4\ell}$. Therefore, the range for the second interval is $\log(\bSmp_2) - \log \frac{1}{4\ell} = O\left(\log\left( \left(\log k\right)/\eps \right)\right)$ implying $\vSmp_2 = O\left( {(\log \left( (\log  k)/\eps\right))^2}/{\eps^2}\right)$ suffices for the desired probability. For $i=1$, since the range is the same as the one interval case, we use the same $\vSmp_1$ as in the previous section. Note $\vSmp_2 \ll \vSmp_1$.}

\medskip
\noindent\textbf{Bias Bound.} We bound the bias of the unclipped version, $\estIIs$ using the following lemma whose proof is in Section~\ref{sec:pfbiastwo}.

\begin{lemma} (\textbf{Unclipped Bias Bound}) \label{lem:bias_two}
Let $\estIIs$ be the unclipped estimate of Algorithm~\ref{alg:twoint} and let $p_{\cA}\left(I_i \middle| x\right)$ be the conditional distribution defined in \eqref{eq:conddist} where $\cA$ is \textsc{EstProbInt}. Then,
\vspace{-5 pt}
	\begin{align} \label{eq:biastwoint}
	\left| \Ent{p} -  \Expc{\estIIs} \right| \le \sum\limits_{i=1}^{2}  \left( \sum\limits_{x \in \mathcal{X}} { p_{\cA}\left(I_i \middle| x \right)}/{\bSmp_i}\right) \leq {\eps}/{3}.
	\end{align}
\end{lemma}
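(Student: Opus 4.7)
The plan is to compute $\Expc{\estIIs}$ explicitly by exploiting independence between the two stages of the algorithm, match it term-by-term against the decomposition of $H(p)$ from Lemma~\ref{lem:decom}, and then apply the pointwise bias analysis of Lemma~\ref{lem:mdlbias}. Because \textsc{EstProbInt} and \textsc{CondExp} consume disjoint blocks of the stream, $\widehat{p}_\cA(I_1)$ is independent of $(\bar H_1,\bar H_2)$, and $\Expc{\widehat{p}_\cA(I_1)} = p_\cA(I_1)$. Conditional on an iteration of \textsc{CondExp} satisfying $\cA(x_t)=I_i$, the symbol $x_t$ has distribution $p_\cA(x\mid I_i)$ and the fresh count $N_{x_t,i}$ is $\text{Bin}(\bSmp_i,p(x_t))$, so up to a negligible correction from the event $\{S_i=0\}$:
\[
  \Expc{\bar H_i} \;=\; \sum_x p_\cA(x\mid I_i)\, \Expc{\log\tfrac{\bSmp_i}{N_{x,i}+1}}.
\]

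Combining this with the Bayes identity $p_\cA(I_i)\,p_\cA(x\mid I_i) = p(x)\,p_\cA(I_i\mid x)$ from \eqref{eq:conddist} gives
\[
  \Expc{\estIIs} \;=\; \sum_{i=1}^{2}\sum_x p(x)\,p_\cA(I_i\mid x)\,\Expc{\log\tfrac{\bSmp_i}{N_{x,i}+1}},
\]
which paired with the expansion $H(p)=\sum_{i,x} p(x)\,p_\cA(I_i\mid x)\log\tfrac{1}{p(x)}$ from Lemma~\ref{lem:decom} yields
\[
  H(p) - \Expc{\estIIs} \;=\; \sum_{i,x} p_\cA(I_i\mid x)\cdot p(x)\,\Expc{\log\tfrac{N_{x,i}+1}{\bSmp_i\,p(x)}}.
\]
Repeating the argument inside Lemma~\ref{lem:mdlbias}, Jensen's inequality on the concave logarithm upper bounds $p(x)\,\Expc{\log\tfrac{N_{x,i}+1}{\bSmp_i p(x)}}$ by $1/\bSmp_i$, while Jensen together with Lemma~\ref{binreciplemma} lower bounds the same quantity by $0$. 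Summing over $i$ and $x$ gives the first inequality of \eqref{eq:biastwoint}.

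To conclude that this is at most $\eps/3$, I would split by interval. The trivial bound $\sum_x p_\cA(I_2\mid x)\le k$ with $\bSmp_2 = C_1 k/\eps$ already yields $\sum_x p_\cA(I_2\mid x)/\bSmp_2 \le \eps/C_1$. For $i=1$ the trivial bound loses too much, so I would partition $[k]$ into $\{x:p(x)\ge \ell/2\}$, which has at most $2/\ell$ elements, and $\{x:p(x)<\ell/2\}$, on which $p_\cA(I_1\mid x) = \Pr(\text{Bin}(\bSmp,p(x))\ge \bSmp\ell)\le e^{-\Omega(\bSmp\ell)}$ by a multiplicative Chernoff bound. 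Since $\bSmp\ell = \Omega((\log k)^{\beta/2}/\eps)$, the total tail contribution is $k\cdot e^{-\Omega((\log k)^{\beta/2}/\eps)}$, which is super-polynomially smaller than $1/\ell$; hence $\sum_x p_\cA(I_1\mid x) = O(1/\ell)$, and
\[
  \sum_x p_\cA(I_1\mid x)/\bSmp_1 \;=\; O\!\left(\tfrac{\eps(\log k)^\gamma}{C_1(\log k)^{\beta}}\right) \;=\; O\!\left(\tfrac{\eps}{C_1(\log k)^{\beta/2}}\right),
\]
using $\gamma=\beta/2$. Choosing $C_1$ large enough makes the two contributions sum to at most $\eps/3$.

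The main obstacle is the $i=1$ term: the naive bound $\sum_x p_\cA(I_1\mid x)\le k$ would force $\bSmp_1=\Omega(k/\eps)$ and wipe out the $(\log k)^\gamma$ savings. The Chernoff argument, exploiting the fact that symbols with $p(x)\ll \ell$ very rarely fool \textsc{EstInt} into outputting $I_1$, is precisely what allows the smaller $\bSmp_1 = C_1 k/(\eps(\log k)^\gamma)$. The choice $\ell=(\log k)^\beta/k$ with $\beta>16$ and $\gamma=\beta/2$ makes $\bSmp\ell$ polylogarithmic in $k$, so the false-positive tail is dominated by $1/\ell$ and the $i=1$ bias contribution is indeed negligible.
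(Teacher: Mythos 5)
Your proposal is correct and follows essentially the same route as the paper: decompose $\Expc{\estIIs}$ via independence of \textsc{EstProbInt} and \textsc{CondExp}, rewrite $\Expc{\bar H_i^*}$ as a conditional expectation over $p_\cA(\cdot\mid I_i)$, match against Lemma~\ref{lem:decom}, apply Jensen with Lemma~\ref{binreciplemma} for the sign/$1/\bSmp_i$ bounds, and then split $[k]$ by whether $p(x)\gtrless \ell/2$ and invoke a Chernoff bound to handle $I_1$ with the smaller $\bSmp_1$. The only (harmless) addition is your explicit acknowledgment of the $\{S_i=0\}$ corner case, which the paper leaves implicit.
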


\newer{Lemma~\ref{lem:bias_two} allows us to choose $\bSmp_1$ and $\bSmp_2$ separately to bound the bias. Interval $I_2$'s contribution is at most $\frac{k}{\bSmp_2}$. For interval $I_1$, we improve upon $\frac{k}{\bSmp_1}$ by partitioning $\cX$ into sets $\cX_1 = \{x \in \cX | p(x) < \ell/2\}$ and $\cX_2  = \{x \in \cX | p(x) \ge \ell/2\}$. For $\cX_1$, $p_{\cA}\left(I_1 \middle| x \right)$ is small by Chernoff bound. For $\cX_2$, since $p(x) \ge \ell/2$, $|\cX_2| \leq 2/\ell$ which is smaller than $k$.  Hence we can choose $N_2 < N_1$.}

In the sample complexity of the two interval algorithm, observe that the term $\bSmp_2\vSmp_2$ dominates. Reducing $\bSmp_2$ is hard because it is independent of the interval length. Therefore we hope to reduce $\vSmp_2$ by partitioning into intervals with smaller lengths. In the smallest interval, if we reduce the range of each estimate to be within a constant, then $O(\frac{1}{\eps^2})$ samples would suffice for concentration. In the next section, we make this concrete by considering an algorithm that uses multiple intervals.

\subsection{General Intervals Algorithm}\label{sec:general_interval}

The general algorithm follows the same principles as the previous section with a larger number of intervals, decreasing the sample requirements at each step, as discussed in Section~\ref{sec:techniques}. However, the proofs are much more involved, particularly in order to obtain an $O(k)$ upper bound on the sample complexity. We will sketch some of the key points and move details to the appendix. 

\noindent\textbf{Intervals.} Let $T = \log^* k$, where $ \log^* k:= \min_i \{\log^{(i)} k \le 1\}$. Consider the following partition of $[0, 1]$: $\{I_i\}_{i = 1}^T$ where $I_1 = [l_1, h_1]$ and for $i = 2, ..., T$, $I_i = [l_i, h_i)$,  $h_i = \frac{(\logjabk{i-1})^{\beta}}{\absz} (\beta > 16)$  and $\ell_{i-1} = h_i$. Define $l_T = 0$ and $h_1 = 1$, then we have for $i = 2, ..., T-1$ :
\begin{align}
	I_1 = \left[\frac{(\logjabk{1})^{\beta}}{\absz}, 1\right],  I_\iter = \left[0, \frac{(\logjabk{T-1})^{\beta}}{\absz}\right), I_i = \left[\frac{(\logjabk{i})^{\beta}}{\absz}, \frac{(\logjabk{i-1})^{\beta}}{\absz}\right). \nonumber 
\end{align}

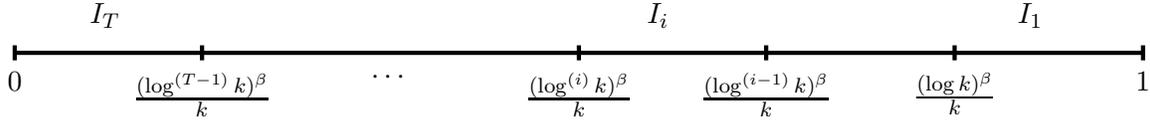
\begin{figure}[H] 
\begin{center}
\begin{tikzpicture}[xscale=15]
\draw[-][ultra thick] (0,0) -- (1,0);
\draw [ultra thick] (0,-.1) node[below]{$0$} -- (0,0.1);
\draw [ultra thick] (0.166,-.1) node[below]{$\frac{(\log^{(T-1)}k)^{\beta}}{k}$} -- (0.166,0.1);
\node[below] at (0.333,-.1) {$\cdots$};
\draw [ultra thick] (0.50,-.1) node[below]{$\frac{(\log^{(i)}k)^{\beta}}{k}$} -- (0.50,0.1);
\draw [ultra thick] (0.666,-.1) node[below]{$\frac{(\log^{(i-1)}k)^{\beta}}{k}$} -- (0.666,0.1);
,\draw [ultra thick] (0.833,-.1) node[below]{$\frac{(\log k)^{\beta}}{k}$} -- (0.833,0.1);
\draw [ultra thick] (1,-.1) node[below]{$1$} -- (1,0.1);
\node[above] at (0.08,0.2) {$I_{T}$};
\node[above] at (0.57,0.2) {$I_i$};
\node[above] at (0.90,0.2) {$I_1$};
\end{tikzpicture}
\caption{Partition of $\left[0,1\right]$ into $T = \log^* k$ intervals}
\end{center}
\end{figure}
	Since $T = \log^* k$, we have $I_T \subset \left[0, e^\beta/k\right)$. We divide the bottleneck of the two intervals algorithm $I_2$, into further intervals until the width of the smallest interval is a constant over $k$ ($e^{\beta}/{k}$) which implies concentration with lesser samples than before. Algorithm~\ref{alg:interval-general} defines a distribution over $T$ intervals for each $x$. Using Lemma~\ref{lem:decom}, similar to the two intervals case, we will estimate each of the $p_{\cA} \left(I_i\right)$  and $H_i$'s independently in Algorithm~\ref{alg:estp_general} (\textsc{GenEstProbInt}) and Algorithm~\ref{alg:esth_general} (\textsc{GenCondExp}). Complete algorithm for $T = \log^*k$ is presented in Algorithm~\ref{alg:genint}.
\begin{algorithm}[h]
	\caption{Estimating intervals: General Case : \textsc{GenEstInt}$(\bset{t},x)$} \label{alg:interval-general}
	\begin{algorithmic}[1]
		\Require $\bset{t},x$ drawn from $p$
		\For{$i = 1 \text{ to } t  $}
		\State Generate $\bSmp_i$ samples from $p$ 
		\State {\textbf{if} $x$ appears more than $\bSmp_i \ell_i$ times  \textbf{then} Output $I_i$}
		\EndFor
		\State Output $I_T$
	\end{algorithmic}
\end{algorithm}

\begin{algorithm}[h]
	\caption{Estimating $p_\cA\left(I_i \right),1 \leq i \leq T-1$ : \textsc{GenEstProbInt}$(\bset{T-1},\vset{T-1})$ } \label{alg:estp_general}
	\begin{algorithmic}[1]
		\Require $\bset{T-1},\vset{T-1}$
		\For{$i = 1, 2, ..., T-1$}
		\State $\widehat{p}_\cA\left(I_i\right) = 0$
		\For{$t = 1 \text{ to } \vSmp_i$}
		\State Sample $x \sim \p$.
		\State{\textbf{if} $\textsc{GenEstInt}\left(\left \lbrace \bSmp_j \right \rbrace_{j = 1}^{i} , x\right) = I_i$ \textbf{then} $\widehat{p}_\cA\left(I_i \right)= \widehat{p}_\cA\left(I_i \right)+ \frac{1}{\vSmp_i}$}	
		\EndFor
		\EndFor
	\end{algorithmic}
\end{algorithm} 

\begin{algorithm}[h]
	\caption{Estimating ${H}_i$'s : \textsc{GenCondExp}$\left( \bset{T}, \vset{T} \right)$} \label{alg:esth_general}
	\begin{algorithmic}[1]
		\Require $\bset{T},\vset{T}$
		\For{$i=1 \text{ to } T$}
		\State $\widehat{H}_i = 0, S_i = 0$
		\For{$t = 1 \text{ to } \vSmp_i$}
		\State Generate $x \sim p$
		\If{$\textsc{GenEstInt}\left(\left \lbrace \bSmp_j \right \rbrace_{j = 1}^{i}, x\right)$is $I_i$}
		\State $S_i = S_i + 1$
		\State Let $N_{x,i} \leftarrow$ \# occurrences of $x$  in the next $\bSmp_i$ samples
		\State $E_{x,i} = \max \{ \log \left(\frac{\bSmp_i}{N_{x,i} + 1} \right), \log \frac{1}{4h_i}\} $ \label{clipping_gen}
		\State $\widehat{H}_i = \widehat{H}_i +E_{x,i} $ 
		\EndIf
		\EndFor
		\State $\bar{H}_i = \frac{\widehat{H}_i}{S_i}$
		\EndFor
	\end{algorithmic}
\end{algorithm} 

\begin{algorithm}[ht]
	\caption{Entropy Estimation with constant space: General Intervals Algorithm} \label{alg:genint}
	\begin{algorithmic}[1]
	\Require Accuracy parameter $\eps>0, \gamma=\beta/2$, a data stream $X_1, X_2, \ldots\sim p$.
	
		\State Set 
		\[
		\bSmp_i=C_{\bSmp} \cdot\frac{\absz}{\eps(\logjabk{i})^{\gamma}}, \hspace{5mm} \vSmp_i = C_{\vSmp}\cdot \frac{(\log (\logjabk{i-1}/\eps))^2}{\eps^{2}} \hspace{5mm} 1 \leq i \leq T-1
		\]
		\[ 
		\bSmp_T = C_{\bSmp} \cdot \frac{k}{\eps}, \hspace{5mm} \vSmp_T = C_{\vSmp} \cdot \frac{(\log (\logjabk{T-1}/\eps))^2}{\eps^2}
		\]
		\State $\left \lbrace \widehat{p}_\cA\left(I_i \right)\right \rbrace_{i=1}^{T-1} = \textsc{GenEstProbInt}\left(\bset{T-1},\vset{T-1} \right)$
		\State $\left \lbrace \bar{H}_i \right \rbrace_{i=1}^{T} = \textsc{GenCondExp}\left(\bset{T}, \vset{T}\right) $  \label{step:estcondexp}
		\State $\esti = \sum_{i = 1}^{T-1} \widehat{p}_\cA\left(I_i \right) \bar{H}_i + (1 - \sum_{i = 1}^{T-1} \widehat{p}_\cA\left(I_i \right))\bar{H}_T$
	\end{algorithmic}
\end{algorithm}

%


\medskip
\noindent\textbf{Memory Requirements.}
The analysis of memory requirement is similar to that of the two interval case. To store parameters $\ell_i, \bSmp_i, \vSmp_i$'s, we only store $k, \eps, \gamma, C_N$ and $C_R$ and compute the parameters on the fly. Notice that for each interval, the execution of \textsc{GenEstInt}, \textsc{GenEstProbInt} and \textsc{GenCondExp} require same memory as that of their two interval counterparts. The trick here is that we don't need to store $\widehat{p}_{\cA} \left(I_i\right)$'s and $\bar{H}_i$'s since we can perform each of \textsc{GenEstProbInt} and \textsc{GenCondExp} for one interval and maintain a running sum of $\widehat{p}_{\cA} \left(I_i\right) \bar{H}_i$'s. Therefore, Algorithm~\ref{alg:genint} uses at most 20 words of space.

\medskip
\noindent\textbf{Sample complexity.} Algorithm~\ref{alg:genint} proves the main claim of our paper in Theorem~\ref{thm:main_perf}. \newer{The key idea to remove the extra loglog factor in Theorem~\ref{thm:perf_two} is to progressively make the number of iterations required smaller for the smaller probability intervals.} Similar to the two interval case, we denote Algorithm~\ref{alg:esth_general} without clipping at Step~\ref{clipping_gen}  by Algorithm~\ref{alg:esth_general}$^*$, We further use $\estis$ to represent the final estimate by Algorithm~\ref{alg:genint} with Algorithm~\ref{alg:esth_general} replaced by Algorithm~\ref{alg:esth_general}$^*$ at Step~\ref{step:estcondexp}. Then the error can be bounded by the following three terms:
\begin{align} \label{eqn:gen_decom}
	|\Ent{p} -  \esti| \le \underbrace{| \Ent{p} -  \Expc{\estis} |}_{\text{Unclipped Bias}} + \underbrace{| \Expc{\esti} -  \Expc{\estis} |}_{\text{Clipping Error}} +\underbrace{ |\esti -  \Expc{ \esti} |}_{\text{Concentration}}.
\end{align}

With the parameters defined in Algorithm~\ref{alg:genint}, we can bound the unclipped bias and clipping error in~\eqref{eqn:gen_decom} by $\eps/3$ each and show that the concentration part is also bounded by $\eps/3$ with probability at least $2/3$. The details are given in Lemma~\ref{lem:unclipbias}, \ref{lem:cliperr}, and~\ref{lem:genvar} in Appendix~\ref{pf:thm:general}.

\section{Open Problems}
There are several questions that arise from our work. While our algorithms require only a constant memory words of space, they require a $\log k$ multiplicative factor more samples (as a function of $k$) than the optimal sample complexity (in~\eqref{eqn:optimal}).
\begin{itemize}
\item 
Does there exist an algorithm for entropy estimation that has the optimal sample complexity and space requirement that is at most $\textrm{poly}(\log k)$?	
\end{itemize}
We are unaware of any implementation that requires sub-linear space in $k$. A simpler goal could be to design a strictly sublinear-space (space requirement $k^\alpha$ for some $\alpha<1$) sample-optimal algorithm. At the same time, there might not exist an algorithm with a small sample complexity. This leads to the following complementary question. 
\begin{itemize}
\item 
Prove a lower bound on the space requirement of a sample-optimal algorithm for entropy estimation. 	
\end{itemize}

Beyond these, obtaining sample-space trade-offs for distribution testing, and property estimation tasks is an exciting future direction. 

\bibliographystyle{unsrt}
\bibliography{masterref}

\appendix  


\section{Proof of Random Hoeffding Inequality (Lemma~\ref{lem:ranhoeff})}
\label{app:simple}
Since $X_i\in[a,b]$, we have $\left \lvert X - \Expc{X} \right \rvert \le b - a$. If $t > p (b - a)$, the left hand side is zero and the inequality  holds. We assume $t \le  p (b - a)$, which is equivalent to $p \ge \frac{t^2}{p(b-a)^2}$. 
\begin{align*}
\Pr\left(\left \lvert X - \Expc{X} \right \rvert \geq \frac{t}{p} \right) &= \sum\limits_{r=0}^{m} \Pr\left(\left \lvert X - \Expc{X} \right \rvert \geq \frac{t}{p} \middle| M=r \right) \Pr\left(M=r\right).
\end{align*}
We divide the summation into two parts, $r \le \left \lfloor {\frac{mp}{2}} \right \rfloor $ and $r \ge \left \lceil {\frac{mp}{2}} \right \rceil$. For the first part, by Chernoff bound,
\begin{align*}
	\sum\limits_{r=0}^{\left\lfloor {\frac{mp}{2}} \right \rfloor} \Pr\left(\left \lvert X - \Expc{X} \right \rvert \geq \frac{t}{p} \middle| M=r \right) \Pr\left(M=r\right) \le \probof{M \le \frac{mp}{2}} \le \exp\left(\frac{-mp}{8} \right).
\end{align*}
For the second part, by Hoeffding Inequality (Lemma~\ref{lem:hoeff}), \begin{align*}
	 \sum\limits_{r=\left \lceil {\frac{mp}{2}} \right \rceil}^{m} 2\exp\left(\frac{-2rt^2}{p^2\left(b-a\right)^2} \right) \Pr\left(M=r\right) \le 2\exp\left( \frac{-2t^2}{p^2\left(b-a\right)^2} \frac{mp}{2}\right) \le 2\exp\left( \frac{-mt^2}{p\left(b-a\right)^2}\right).
\end{align*}
Combining the two, we get
\begin{align*}
	\Pr\left(\left \lvert X - \Expc{X} \right \rvert \geq \frac{t}{p} \right) \le  \exp\left(\frac{-mp}{8} \right) + 2\exp\left( \frac{-mt^2}{p\left(b-a\right)^2}\right) \le 3\exp\left( \frac{-mt^2}{8p\left(b-a\right)^2}\right),
\end{align*}
where the last part uses the bound on $t$.

\section{Proofs from Section~\ref{sec:simplealgo}}\label{sec:mdlpfs}
\section{Two interval Algorithm Proofs}\label{sec:twointproofs}

\subsection{Expectation of Unclipped Version Estimates}\label{sec:expcondent}
Let $S_i$ be the number of times \textsc{EstInt} $= I_i$ during the $R_i$ iterations for interval $I_i$ in \textsc{CondExp}. Let $S_{x,i}$ be the number of times symbol $x$ is the first sampled element among these. Note that $S_i \sim \text{Bin} \left(\vSmp_i,p_{\cA}(I_i)\right)$  and $S_{x,i} \sim \text{Bin}\left(S_i,p_{\cA} \left(x \mid I_i \right)\right)$ where $p_{\cA} \left(x \mid I_i \right) =  \dfrac{p\left(x\right)p_{\cA} \left(I_i \mid x\right)}{p_{\cA} \left(I_i\right)}$. Let $N_{x,i,v}$ be $N_{x,i}$ (defined in \textsc{CondExp}) when $x$ is sampled and \textsc{EstInt}$(N,x) = I_i$ for the $v^{\text{th}}$ time. 
Denote the unclipped version of $\bar{H}_i$ by $\bar{H}^*_i$. We can write $\bar{H}^*_i$ as follows
\begin{equation}\label{eq:condentropy}
\bar{H}^*_i = \frac{1}{S_i}\sum\limits_{x \in \mathcal{X}} \sum\limits_{v=1}^{S_{x,i}} \log \left(\frac{\bSmp_i}{N_{x,i,v}+1} \right).
\end{equation}
The above equation implies that $\bar{H}^*_i$ is an empirical mean of  $\log \left(\frac{\bSmp_i}{N_{X,i}+1} \right)$ where $X \sim p_{\cA} \left(x \mid I_i \right)$. Note that for a fixed $x$, $S_{x,i} \sim \text{Bin}\left(S_i, p_{\cA}(x|I_i) \right)$. Therefore, the expectation is 


\begin{equation}\label{eq:condent}
\Expc{\bar{H}_i^*} = \sum\limits_{x \in \mathcal{X}} \frac{p\left(x\right) p_{\cA}\left(I_i \middle| x \right)}{p_{\cA}\left(I_i\right)} \Expc{\log \left( \frac{\bSmp_i}{N_{x,i}+1} \right)}. 
\end{equation}
\subsection{Proof of Lemma~\ref{lem:bias_two} : Unclipped Bias Bound} \label{sec:pfbiastwo}
Define $\bar{H}^*_1$ and $\bar{H}^*_2$ to be the analog of $\bar{H}_1$ and $\bar{H}_2$ in the unclipped version of Algorithm~\ref{alg:twoint}. We first note that 
\begin{equation}\label{expcest} 
\Expc{\estIIs} = p_{\cA}(I_1)\Expc{\bar{H}^*_1} + \left(1-p_{\cA}(I_1)\right)\Expc{\bar{H}^*_2}.
\end{equation}
The above is true since, $\Expc{\widehat{p}_{\cA}(I_1)} =  p_{\cA}(I_1)$ and Algorithm~\ref{alg:esthi} estimates $\widehat{p}_{\cA}\left(I_1\right)$ and $\bar{H}^*_1,\bar{H}^*_2$ independently. \par 

We use the following result from equation~\ref{eq:condent} in Section~\ref{sec:expcondent}
\begin{equation} 
\Expc{\bar{H}^*_i}  = \sum\limits_{x \in [k]} \frac{p \left(x\right) p_{\cA} \left(I_i \middle| x \right)}{p_{\cA} \left( I_i\right) } \Expc{\log \left( \frac{\bSmp_i}{N_{x,i}+1} \right)} .
\end{equation}
 
 Using Lemma \eqref{entintexp} and Jensen's inequality, we have
 \begin{align}
 \Expc{\estIIs} - \Ent{p} &\leq \sum\limits_{i=1}^{2} p_{\cA} \left( I_i\right) \left( \sum\limits_{x \in \mathcal{X}} \frac{p\left(x \right) p_{\cA} \left(I_i \middle| x \right)}{p_{\cA} \left( I_i\right)}\Expc{  \log \left( \frac{\bSmp_i p\left(x\right)}{N_{x,i}+1} \right)} \right) \nonumber  \\
 &\leq \sum\limits_{i=1}^{2} p_{\cA} \left( I_i\right) \left( \sum\limits_{x \in \mathcal{X}} \frac{p\left(x \right) p_{\cA} \left(I_i \middle| x \right)}{p_{\cA} \left( I_i\right)} \log \left( \Expc{ \frac{\bSmp_i p\left(x\right)}{N_{x,i}+1} }\right) \right) \nonumber  \\
 &\leq \sum\limits_{i=1}^{2} p_{\cA} \left( I_i\right) \left( \sum\limits_{x \in \mathcal{X}} \frac{p\left(x \right) p_{\cA} \left(I_i \middle| x \right)}{p_{\cA} \left( I_i\right)} \log \left( \frac{\bSmp_i }{\bSmp_i+1} \right) \right) \leq 0. \label{eq:negbias}
 \end{align}
 where \eqref{eq:negbias} follows from Lemma~\ref{binreciplemma}. To bound the reverse, using~\eqref{entintexp}, Jensen's inequality and the fact that $\log (1+x) \leq x $, we have
 \begin{align}
 \Ent{p} - \Expc{\estIIs} &= \sum\limits_{i=1}^{2} p_{\cA} \left( I_i\right) \left( \sum\limits_{x \in \mathcal{X}} \frac{p\left(x \right)p_{\cA} \left(I_i \middle| x \right)}{p_{\cA} \left( I_i\right)}\Expc{  \log \left( \frac{N_{x,i}+1}{\bSmp_i p\left(x\right)} \right)} \right) \nonumber \\
&\leq \sum\limits_{i=1}^{2} p_{\cA} \left( I_i\right) \left( \sum\limits_{x \in \mathcal{X}} \frac{p\left(x \right) p_{\cA} \left(I_i \middle| x \right)}{p_{\cA} \left( I_i\right)} \log \left( \Expc{  \frac{N_{x,i}+1}{\bSmp_i p\left(x\right)} } \right) \right)  \nonumber \\
&= \sum\limits_{i=1}^{2} p_{\cA} \left( I_i\right) \left( \sum\limits_{x \in \mathcal{X}} \frac{p\left(x \right)p_{\cA} \left(I_i \middle| x \right)}{p_{\cA} \left( I_i\right)} \log \left(  \frac{\bSmp_i p\left(x\right)+1}{\bSmp_i p\left(x\right)} \right) \right)  \nonumber \\
&\leq \sum\limits_{i=1}^{2} p_{\cA} \left( I_i\right)  \left( \sum\limits_{x \in \mathcal{X}} \frac{ p_{\cA} \left(I_i \middle| x \right)}{\bSmp_i p_{\cA} \left( I_i\right)}\right)  \nonumber \\
&= \sum\limits_{i=1}^{2}  \left( \sum\limits_{x \in \mathcal{X}} \frac{ p_{\cA} \left(I_i \middle| x \right)}{\bSmp_i}\right).  \label{biasbd}
 \end{align}

For interval $I_1$, we partition $\cX$ into two sets $\cX_1 = \{x \in \cX | p(x) < \ell/2\}$ and $\cX_2  = \{x \in \cX | p(x) \ge \ell/2\}$. For $ x \in \cX_1$, the probability that algorithm $\textsc{EstInt}(\bSmp,x) = I_1$ is small. In particular, by Chernoff bound, 

\begin{equation}
p_{\cA}\left(I_1 \middle| x \right)= \Pr\left(N_x > \bSmp_1 \ell \right) \leq \exp\left(-\frac{\bSmp_1 \ell}{6}\right).
\end{equation}

For $ x \in \cX_2$, since $p\left(x\right) \ge \ell/2$, $| \cX_2 | \leq \frac{2}{\ell}$ and each $p_{\cA}\left(I_1 \middle| x \right) \le 1$, we have
 \begin{align}
\sum\limits_{x \in \mathcal{X}} \frac{p_{\cA}\left(I_1 \middle| x \right)}{\bSmp_1} &= \sum\limits_{x \in \cX_1} \frac{p_{\cA}\left(I_1 \middle| x \right)}{\bSmp_1} + \sum\limits_{x \in \cX_2} \frac{p_{\cA}\left(I_1 \middle| x \right) }{\bSmp_1}  \le \frac{k}{\bSmp_1}  \exp\left(-\frac{\bSmp_1 \ell}{6} \right)  + \frac{2}{\bSmp_1 \ell}. \label{n1scale} 
\end{align}  
For interval $I_2$, we simply bound each term by 1 and get
\begin{align}
	\sum\limits_{x \in \mathcal{X}} \frac{p_{\cA}\left(I_2 \middle| x \right)}{\bSmp_2}  \le \frac{k}{\bSmp_2}. \label{eqn:n2scale}
\end{align}

Plugging in the values of $\bSmp_1, \bSmp_2$ defined in Algorithm~\ref{alg:twoint}, it is easy to see there exists a constant $C_1$ such that~\eqref{eqn:n2scale} and~\eqref{n1scale} are bounded above by $\frac{\eps}{6}$ which completes the proof.

\subsection{Proof of Lemma~\ref{lem:cliptwoint} : Clipping Error Bound} \label{sec:pfcliptwoint}
Define $\bar{H}^*_1$ and $\bar{H}^*_2$ to be the analogue of $\bar{H}_1$ and $\bar{H}_2$ in the unclipped version of Algorithm~\ref{alg:esthi}. Using \eqref{expcest} and the fact that the clipping step is applied only when computing $\bar{H}_2$, we have 

\begin{equation}
\left \lvert  \Expc{\estII} -  \Expc{\estIIs} \right \rvert  \leq p_{\cA} \left( I_2\right) \left \lvert \Expc{\bar{H}_2} - \Expc{\bar{H}^*_2} \right \rvert.
\end{equation}
 From Algorithm~\ref{alg:twoint}, we note that $\bar{H}_2$ is different from $\bar{H}^*_2$ only when $ \cE_x= \{ \textsc{EstInt}(\bSmp,x) = I_2, N_{x,2} > 4\bSmp_2 \ell - 1\}$ occurs. Therefore from \eqref{eq:condent}, we have the following
\begin{align}
 \left \lvert \Expc{\bar{H}_2} - \Expc{\bar{H}^*_2} \right \rvert &\leq \sum\limits_{x \in \cX } \probof{N_{x,2} > 4\bSmp_2 \ell - 1} \frac{p\left(x\right) p_{\cA}\left(I_2 \middle| x \right)}{p_{\cA}\left(I_2 \right)} \left( \log \left( \frac{1}{4 \ell} \right) -  \log \left( \frac{\bSmp_2}{\bSmp_2+1}\right) \right) \nonumber \\
&\leq \sum\limits_{x \in \cX}  \probof{N_{x,2} > 4\bSmp_2 \ell - 1} \frac{p\left(x\right) p_{\cA}\left(I_2 \middle| x \right)}{p_{\cA}\left(I_2\right)} \log k. 
\end{align}
If $p(x) > 2 \ell$, by Chernoff bound,

\[ \probof{\textsc{EstInt}(\bSmp,x) = I_2} \le \exp\left(-\frac{\bSmp \ell}{3}\right).\]
If $p(x) < 2 \ell$,
\[ \probof{N_{x,2} > 4\bSmp_2 \ell - 1} = p_{\cA}\left(I_2 \middle| x \right) \le \exp\left(-\frac{2\bSmp_2 \ell}{3}\right).\]

Therefore, plugging in values of $N$ and $N_2$, we have
\begin{equation}
p_{\cA}\left( I_2\right) \left \lvert \Expc{\bar{H}_2} - \Expc{\bar{H}^*_2} \right \rvert \leq \min \left \lbrace \exp\left(-\frac{\bSmp \ell}{3}\right), \exp\left(-\frac{2\bSmp_2 \ell}{3}\right) \right \rbrace \log k \leq \frac{\eps}{3}.
\end{equation}






 \subsection{Proof of Lemma~\ref{lem:concentration} : Concentration Bound}\label{sec:pfconctwoint}

   Using \eqref{expcest}, we have
 \begin{align}
 \left \lvert \Expc{\estII} - \estII \right \rvert &= \left \lvert p_{\cA} \left(I_1\right) \Expc{\bar{H_1}} + p_{\cA}\left(I_2\right) \Expc{\bar{H_2}} -  \widehat{p}_{\cA} \left(I_1\right) \bar{H}_1 - \widehat{p}_{\cA} \left(I_2\right) \bar{H_2} \right \rvert \nonumber \\
  \begin{split}
 &=\left \lvert p_{\cA} \left(I_1\right) \Expc{\bar{H_1}} + p_{\cA} \left(I_2\right) \Expc{\bar{H_2}} - p_{\cA} \left(I_1\right) \bar{H}_1 -p_{\cA} \left(I_2\right) \bar{H}_2  \right. \nonumber \\
 &\left. \qquad +p_{\cA} \left(I_1\right) \bar{H}_1 + p_{\cA} \left(I_2\right) \bar{H}_2 - \widehat{p}_{\cA} \left(I_1\right) \bar{H_1} -\widehat{p}_{\cA}\left(I_2\right) \bar{H_2} \right \rvert 
 \end{split} \nonumber \\
   \begin{split}
 &=\left \lvert p_{\cA} \left(I_1\right) (\Expc{\bar{H_1}}-\bar{H}_1) + p_{\cA} \left(I_2\right) (\Expc{\bar{H_2}} -\bar{H}_2)  \right. \nonumber \\
 &\left. \qquad +(p_{\cA} \left(I_1\right) - \widehat{p}_{\cA} \left(I_1\right)) \bar{H_1}+ (p_{\cA} \left(I_2\right)  -\widehat{p}_{\cA}\left(I_2\right)) \bar{H_2} \right \rvert 
 \end{split} \nonumber \\
 \begin{split}
 &\le\left \lvert p_{\cA} \left(I_1\right) (\Expc{\bar{H_1}}-\bar{H}_1)\rvert + \lvert p_{\cA} \left(I_2\right) (\Expc{\bar{H_2}} -\bar{H}_2) \rvert \right. \nonumber \\
 &\left. \qquad +\lvert(p_{\cA} \left(I_1\right) - \widehat{p}_{\cA} \left(I_1\right)) \bar{H_1}+ (p_{\cA} \left(I_2\right)  -\widehat{p}_{\cA}\left(I_2\right)) \bar{H_2} \right \rvert 
 \end{split}\\
 &= \sum\limits_{i=1}^{2}  \left \lvert p_{\cA} \left(I_i\right) \left( \Expc{\bar{H_i}} - \bar{H_i} \right) \right \rvert + \left \lvert \left(p_{\cA} \left(I_1\right) - \widehat{p}_{\cA} \left(I_1\right) \right) \left( \bar{H}_1 - \bar{H}_2 \right) \right \rvert\label{threeterms},
\end{align}
where the inequality is from the triangle inequality, and~\eqref{threeterms} is true because $p_{\cA}\left(I_1\right)+p_{\cA}\left(I_2\right) = \widehat{p}_{\cA} \left(I_1\right)+\widehat{p}_{\cA} \left(I_2\right)=1$, implying that
$	p_{\cA}\left(I_1\right)  - \widehat{p}_{\cA} \left(I_1\right)=  - (p_{\cA} \left(I_2\right) - \widehat{p}_{\cA}\left(I_2\right))$.

We first bound $|p_{\cA}\left(I_1\right) - \widehat{p}_{\cA} \left(I_1\right) | |\bar{H}_1 - \bar{H}_2 |$. Note that $\bar{H}_1 \in \left[ \log \frac{\bSmp_1}{\bSmp_1 + 1} ,\log \bSmp_1 \right]$. Due to clipping, $\bar{H}_2 \in \left[ \log \frac{\bSmp_2}{4\bSmp_2 \ell + 1}, \log \bSmp_2 \right]$. Since $\bSmp_2 > \bSmp_1$, $\left \lvert \widehat{H}_1 - \widehat{H}_2 \right \rvert \leq \log \frac{\bSmp_2 \left(\bSmp_1+1\right)}{\bSmp_1}$. Since $\widehat{p}_{\cA} \left(I_1\right)$ is the average of $R$ i.i.d binary random variables with mean $p_{\cA}\left( I_1 \right)$, by Hoeffding's inequality (Lemma~\ref{lem:hoeff}), we have 
\begin{equation}\label{condRhoeff}
\probof{ | p_{\cA}\left( I_1 \right) - \widehat{p}_{\cA}\left(I_1\right) | > t }\leq 2\exp\left( -2\vSmp t^2\right). \nonumber
\end{equation}

Let $t = \frac{\eps}{ 9 \log \frac{\bSmp_2 \left(\bSmp_1+1\right)}{\bSmp_1} }$. There exists constant $C_2$ such that for the value of $\vSmp_1$ from Algorithm~\ref{alg:twoint}, with probability at least $8/9$,
\[
	|{p}_{\cA} \left(I_1\right)  - \widehat{p}_{\cA} \left(I_1\right) | |\bar{H}_1 - \bar{H}_2 |  \le \eps/9.
\]

We cannot directly use Hoeffding's inequality bound $\left \lvert \bar{H_i}  - \Expc{\bar{H_i}} \right \rvert$  since the number of samples that we are taking an average over is a random variable. We therefore apply the Random Hoeffding inequality (Lemma~\ref{lem:ranhoeff}) to $\bar{H_1}$ to get
\begin{equation}\label{eq:twointRH}
\probof{ {p}_{\cA} \left(I_i\right) |\bar{H}_i - \Expc{\bar{H}_i}|  > \eps/9} \leq 3\exp\left(\frac{-R_i \eps^2}{72 p_{\cA}\left(I_i\right) \left(b_i-a_i\right)^2} \right),
\end{equation}
where $[a_i, b_i]$ is the possible range of each independent variables when estimating $\bar{H}_i$. Since $a_1 = \log \left(\frac{\bSmp_1}{\bSmp_1 + 1} \right), b_1 = \log \left(\bSmp_1 \right)$, $b_1 - a_1 = \log (\bSmp_1 + 1) = O(\log \frac{k}{\eps})$. 
Therefore, there exists a constant $C_2$ such that $R_1 = C_2 \frac{\log^2 (k/ \eps)}{\eps^2}$ suffices for success probability to be at least $8/9$.

The reduction in sample complexity is obtained for $i = 2$. Here $a_2 = \log \frac{1}{4\ell} $ instead of  $ \log \left(\frac{\bSmp_2}{\bSmp_2 + 1} \right)$ because of the clipping step. Since $b_2 = \log \left(\bSmp_2 \right)$, $b_2 - a_2 = \log (4 \bSmp_2 \ell)= O\left(\log\left( \left(\log k\right)/\eps \right)\right)$. Therefore, $\exists$ constant $C_2$, such that $R_2 = C_2 \cdot \frac{\left(\log \left( (\log k)/\eps\right) \right)^2}{\eps^2} $ would suffice to get a probability at least $8/9$.

\section{General Interval Algorithm} \label{pf:thm:general}

\subsection{Unclipped Bias Bound} \label{sec:bias}
We now bound the bias of the unclipped version of the entropy estimate. 
\begin{lemma} (\textbf{Unclipped Bias bound})  \label{lem:unclipbias}
	Let $\estis$ be the entropy estimate given by Algorithm~\ref{alg:genint} without the clipping step in Algorithm~\ref{alg:esth_general} , then 
	\begin{equation}
	\left \lvert \Expc{\estis} - \Ent{p} \right \rvert \leq \frac{\eps}{3}.
	\end{equation}
\end{lemma}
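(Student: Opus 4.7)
The plan is to first reduce the general case to a sum of per-interval bias terms, mirroring the two-interval argument of Lemma~\ref{lem:bias_two}. Using Lemma~\ref{lem:decom} with $\cA = \textsc{GenEstInt}$, and the fact that $\widehat{p}_\cA(I_i)$ and the $\bar H_i^*$ produced by \textsc{GenEstProbInt} and the unclipped \textsc{GenCondExp} are built from independent sample blocks, I will show as in Section~\ref{sec:pfbiastwo} that
\[
	\EE[\estis] \;=\; \sum_{i=1}^{T} p_{\cA}(I_i)\,\EE[\bar H_i^*],
	\qquad
	\EE[\bar H_i^*] \;=\; \sum_{x\in[k]} \frac{p(x)\,p_{\cA}(I_i\mid x)}{p_{\cA}(I_i)}\,\EE\!\left[\log\frac{N_i}{N_{x,i}+1}\right].
\]
Applying Jensen's inequality in both directions on $\log$ and using Lemma~\ref{binreciplemma} (exactly as in~\eqref{eq:negbias}--\eqref{biasbd}) will then give the master bound
\[
	\bigl|\Ent{p} - \EE[\estis]\bigr| \;\le\; \sum_{i=1}^{T}\sum_{x\in[k]} \frac{p_{\cA}(I_i\mid x)}{N_i}.
\]

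The second step is to bound each of the $T$ inner sums. For the bottom interval $i = T$, I will simply use $p_{\cA}(I_T\mid x)\le 1$ to get $\sum_x p_{\cA}(I_T\mid x)/N_T \le k/N_T$, which by the choice $N_T = C_N k/\eps$ is at most $\eps/(6T)$ (or any desired small constant fraction of $\eps$) for $C_N$ large enough. For $i<T$, I will reuse the partitioning trick from Lemma~\ref{lem:bias_two}: split $[k] = \cX_{i,\text{small}}\cup \cX_{i,\text{big}}$ where $\cX_{i,\text{big}} = \{x : p(x)\ge \ell_i/2\}$. Since \textsc{GenEstInt} outputs $I_i$ only if $x$ appears $\ge N_i\ell_i$ times in a block of $N_i$ fresh samples, a Chernoff bound gives $p_{\cA}(I_i\mid x)\le \exp(-N_i\ell_i/6)$ for $x\in\cX_{i,\text{small}}$, while $|\cX_{i,\text{big}}|\le 2/\ell_i$ lets me bound the large-probability part. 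This yields
\[
	\sum_{x}\frac{p_{\cA}(I_i\mid x)}{N_i} \;\le\; \frac{k}{N_i}\exp\!\Bigl(-\frac{N_i\ell_i}{6}\Bigr) \;+\; \frac{2}{N_i\ell_i}.
\]

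The third step is to plug in the parameter choices $N_i = C_N k/(\eps(\log^{(i)}k)^\gamma)$, $\ell_i = (\log^{(i)}k)^\beta/k$, $\gamma = \beta/2$, $\beta>16$. Then $N_i\ell_i = C_N(\log^{(i)}k)^{\beta/2}/\eps$, so the first summand becomes $\eps(\log^{(i)}k)^{\beta/2}\exp(-C_N(\log^{(i)}k)^{\beta/2}/(6\eps))$ and decays super-polynomially in $\log^{(i)}k$, while the second summand equals $2\eps/(C_N(\log^{(i)}k)^{\beta/2})$. Both bounds are geometrically dominated by their value at $i=T-1$ (since $\log^{(i)}k$ is monotone increasing as $i$ decreases), so $\sum_{i=1}^{T-1}$ of each converges, and by picking $C_N$ a sufficiently large absolute constant the aggregate over all $T$ intervals is at most $\eps/3$.

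The main obstacle is bookkeeping the sum over $i=1,\ldots,T-1$ without picking up a $T = \log^* k$ factor that would spoil the constants. The key observation that makes this work is that the per-interval bounds form a rapidly decaying sequence in $\log^{(i)}k$ (since $\log^{(i)}k$ grows iteratively as $i$ decreases and $\beta/2>1$), so $\sum_{i=1}^{T-1}(\log^{(i)}k)^{-\beta/2}$ is bounded by an absolute constant independent of $T$; this is what allows the total bias from the $T-1$ large intervals to be absorbed into $\eps/3$ together with the $\eps$-sized contribution from $I_T$ by choosing $C_N$ large enough.
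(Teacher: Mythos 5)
Your proposal is correct and follows essentially the same route as the paper: the same decomposition $\EE[\estis]=\sum_i p_\cA(I_i)\EE[\bar H_i^*]$ leading to the master bound $\sum_i\sum_x p_\cA(I_i\mid x)/N_i$, the same small/large-probability split with Chernoff and a $2/\ell_i$ cardinality bound, and the same key observation that $\sum_{i=1}^{T-1}(\log^{(i)}k)^{-\gamma}$ is bounded by an absolute constant (the paper makes this precise via $\log^{(T-i)}k\ge e^{i-1}$, giving a geometric tail). The only blemish is the parenthetical ``at most $\eps/(6T)$'' for the $I_T$ contribution, which would require $C_N\propto T$ and thus not be constant; the correct (and also stated) claim is simply $k/N_T=\eps/C_N$, a small constant fraction of $\eps$.
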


\begin{proof}
  Denote the unclipped versions of $\bar{H}_i$  by $\bar{H}_i^*$. For interval $I_i$, let $S_i$ be the number of times $\textsc{GenEstInt}\left(\left \lbrace \bSmp_j \right \rbrace_{j = 1}^{i}, x\right) = I_i$ during $R_i$ iterations in Algorithm~\ref{alg:esth_general}. For $x \in \mathcal{X}$, let $S_{x,i}$ be the number of times symbol $x$ is the first sampled element among these. Note that $S_i \sim \text{Bin} \left(\vSmp_i,p_{\cA}\left(I_i\right)\right)$  and $S_{x,i} \sim \text{Bin}\left(S_i,p_{\cA}\left(x \mid I_i \right)\right)$. Let $N_{x,i,v}$ be $N_{x,i}$ (defined in \textsc{GenCondExp}) when $x$ is first sampled and $\textsc{GenEstInt}\left(\left \lbrace \bSmp_j \right \rbrace_{j = 1}^{i}, x\right) = I_i$ for the $v^{\text{th}}$ time. We can write $\bar{H}_i^*$ as follows.
\begin{equation}
\bar{H}_i^* = \frac{1}{S_i}\sum\limits_{x \in \mathcal{X}} \sum\limits_{v=1}^{S_{x,i}} \log \left(\frac{\bSmp_i}{N_{x,i,v}+1} \right).
\end{equation} 
 
Since  $\widehat{p}_{\cA}\left( I_i \right)$ and $\bar{H}_i^*$ are independent, we have 
\begin{equation}
\Expc{\estis}= \sum\limits_{i=1}^{T} \Expc{\widehat{p}_{\cA}\left( I_i\right)} \Expc{\bar{H}^*_i} = \sum\limits_{i=1}^{T} p_{\cA}\left(I_i\right) \Expc{\bar{H}^*_i}.
\end{equation}
For the interval $I_i$, $\Expc{\bar{H}_i^*}$ can be written as (refer Section~\ref{sec:expcondent} for detailed argument):
\begin{align*}
\Expc{\bar{H}_i^*} = \sum\limits_{x \in \mathcal{X}} \frac{p\left(x\right) p_{\cA}\left(I_i \middle| x \right)}{p_{\cA}\left(I_i\right)} \Expc{\log \left( \frac{\bSmp_i}{N_{x,i}+1} \right)}.
\end{align*}
Similar to Equations \eqref{eq:negbias} and \eqref{biasbd}, we have
\begin{align*}
\Expc{\estis} - \Ent{p} &\leq \sum\limits_{i=1}^{T}  p_{\cA}\left(I_i \right) \left( \sum\limits_{x \in \mathcal{X}} \frac{p\left(x \right) p_{\cA} \left(I_i \middle| x \right)}{p_{\cA} \left(I_i\right)} \log \left( \frac{\bSmp_i }{\bSmp_i+1} \right) \right) \leq 0, \\
\Ent{p} - \Expc{\estis} &\leq \sum\limits_{i=1}^{T}  \left( \sum\limits_{x \in \mathcal{X}} \frac{ p_{\cA}\left(I_i \middle| x \right)}{\bSmp_i}\right). 
\end{align*}
Therefore,
\begin{equation}\label{eq:genbiasfinal}
\left \lvert \Ent{p} - \Expc{\estis} \right \rvert \leq \sum\limits_{i=1}^{T} \left( \sum\limits_{x\in \mathcal{X}} \frac{p_{\cA} \left(I_i \middle| x \right)}{\bSmp_i} \right). 
\end{equation}
For interval $I_i$, $1 \leq i \leq T-1$, we divide the symbols into $\mathcal{X}_l = \left \lbrace x : p_x \leq \frac{l_i}{2} \right \rbrace$ and $\mathcal{X}_m = \left \lbrace x : p_x > \frac{l_i}{2} \right \rbrace$ and get

\begin{align}
\sum\limits_{x \in \mathcal{X}} \frac{p_{\cA} \left(I_i \middle| x \right)}{\bSmp_i} &\leq \sum\limits_{x \in \cX_l} \frac{p_{\cA} \left(I_i \middle| x \right)}{\bSmp_i} + \sum\limits_{x \in \cX_m} \frac{p_{\cA} \left(I_i \middle| x \right)}{\bSmp_i} \nonumber \\
&\leq \frac{1}{\bSmp_i} \left( \sum\limits_{x \in \cX_l} \exp\left(-\frac{\bSmp_i l_i}{6}\right) \right) + \frac{2}{\bSmp_i l_i} \nonumber \\
&\leq \frac{k}{\bSmp_i}\exp\left(-\frac{\bSmp_i l_i}{6}\right) + \frac{2}{\bSmp_i l_i}. \label{eq:gennicond}
\end{align}
Substituting the values of $\bSmp_i, l_i$, 
\begin{align}
\sum\limits_{x \in \mathcal{X}} \frac{p_{\cA} \left(I_i \middle| x \right)}{\bSmp_i}  &\leq \frac{\eps}{C_N } \left(\log^{\left(i\right)} k\right)^{\gamma} \exp\left( -C_N\frac{\left( \log^{\left(i\right)} k\right)^{\beta - \gamma}}{6\eps} \right) + 2\frac{\eps}{C_N }\left(\log^{\left(i\right)} k\right)^{\gamma-\beta}  \leq \frac{3}{C_N}\frac{\eps}{\left(\log^{\left(i\right)}k\right)^{\gamma}}. \nonumber 
\end{align}
The last inequality holds because $\beta = 2\gamma$ and $e^{-x} \le \frac{1}{x^2}$ for $x > 0$. Hence,
\begin{align} \label{eqn:sum_T-1}
	\sum\limits_{i=1}^{T-1} \left( \sum\limits_{x\in \mathcal{X}} \frac{p_{\cA} \left(I_i \middle| x \right)}{\bSmp_i} \right) \leq \frac{3\eps}{C_N} \sum\limits_{i=1}^{T-1} \frac{1}{\left(\log^{\left(i\right)}k\right)^{\gamma}} = \frac{3\eps}{C_N}  \sum\limits_{i=1}^{T-1} \frac{1}{\left(\log^{\left(T - i\right)}k\right)^{\gamma}}. 
\end{align}

Let $a_i = \log^{\left(T - i\right)}k$. Then $a_{i+1} = e^{a_i}$. Since $T = \log^* k$, we have $1 \le a_1\le e$. It can be shown that $\frac{a_{i+1}}{a_i} = \frac{e^{a_i}}{a_i} \ge e$. This implies 
\begin{align} \label{eqn:log_bound}
	\forall i, a_i = \log^{\left(T - i\right)}k \ge e^{i-1}.
\end{align}
Therefore,
\begin{align*}
	\sum\limits_{i=1}^{T-1} \frac{1}{\left(\log^{\left(T - i\right)}k\right)^{\gamma}} = \sum\limits_{i=1}^{T-1} \frac{1}{a_i^\gamma} \le  \sum\limits_{i=1}^{T-1} \frac{1}{e^{\gamma(i-1)}} \le 2.
\end{align*}

Plugging this in~\eqref{eqn:sum_T-1}, we can see $\exists$ constant $C_N > 36$, such that:
\begin{align*}
		\sum\limits_{i=1}^{T-1} \left( \sum\limits_{x\in \mathcal{X}} \frac{p_{\cA} \left(I_i \middle| x \right)}{\bSmp_i} \right)  \le \frac{\eps}{6}.
\end{align*}

For the $T^{th}$ interval, 
\begin{equation}
\sum\limits_{x \in \mathcal{X}} \frac{p_{\cA} \left(T \middle| x \right)}{\bSmp_T} \leq \frac{k}{\bSmp_T} \leq \frac{\eps}{6}.
\end{equation}
Adding the contributions from all the intervals gives us the desired bound.
\end{proof}

\subsection{Clipping Error Bound}\label{sec:clip}

We now bound the additional bias induced by the clipping step (Step~\ref{clipping_gen} of \textsc{GenCondExp}). 
\begin{lemma}(\textbf{Clipping Error bound})\label{lem:cliperr}
	Let $\esti$ be the estimate given by Algorithm~\ref{alg:genint} and $\estis$ be the entropy estimate without the clipping step in Algorithm~\ref{alg:esth_general} , then 
	\begin{equation} \label{eqn:lemma_clip}
	\left \lvert \expectation{\esti} -  \expectation{\estis}\right \rvert \le \frac{\eps}{3}. 
	\end{equation}
\end{lemma}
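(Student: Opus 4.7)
The plan is to mirror the two-interval clipping argument (Lemma~\ref{lem:cliptwoint}) interval by interval, and then aggregate using the same iterated-log telescoping that powers Lemma~\ref{lem:unclipbias}. First, because \textsc{GenEstProbInt} and \textsc{GenCondExp} draw fresh samples, the estimates $\widehat{p}_\cA(I_i)$ are independent of $(\bar{H}_i, \bar{H}_i^*)$; expanding the definitions of $\esti$ and $\estis$ and using $\sum_i p_\cA(I_i) = 1$ gives
\[
\EE[\esti] - \EE[\estis] = \sum_{i=1}^{T} p_\cA(I_i)\bigl(\EE[\bar{H}_i] - \EE[\bar{H}_i^*]\bigr).
\]
Clipping can only raise the per-symbol estimate $E_{x,i}$, so each summand is nonnegative. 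The $i=1$ term vanishes identically: since $h_1 = 1$, we have $N_{x,1} \leq N_1 \leq 4 N_1 h_1 - 1$, so the clipping in Step~\ref{clipping_gen} never activates.

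Next, I would handle each $i \geq 2$ separately along the lines of Lemma~\ref{lem:cliptwoint}. The clipping addition at a single iteration is supported on the event $\{N_{x,i} > 4 N_i h_i - 1\}$ and bounded above by $\log(1/(4h_i)) \leq \log k$. Taking expectations and using the conditional-distribution identity derived in Section~\ref{sec:expcondent},
\[
p_\cA(I_i)\,\bigl|\EE[\bar{H}_i] - \EE[\bar{H}_i^*]\bigr| \leq \log k \cdot \sum_{x \in [k]} p(x)\, p_\cA(I_i \mid x)\, \Pr\bigl(N_{x,i} > 4 N_i h_i - 1\bigr).
\]
I would then split the sum according to whether $p(x) > 2 h_i$ or $p(x) \leq 2 h_i$. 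When $p(x) > 2 h_i = 2 \ell_{i-1}$, the test for $I_{i-1}$ inside \textsc{GenEstInt} (based on $N_{i-1}$ samples and threshold $N_{i-1} \ell_{i-1}$) has mean at least $2 N_{i-1} h_i$, so a Chernoff lower-tail bound gives $p_\cA(I_i \mid x) \leq \exp(-N_{i-1} h_i / 4)$. When $p(x) \leq 2 h_i$, $N_{x,i}$ has mean at most $2 N_i h_i$, so a Chernoff upper-tail bound gives $\Pr(N_{x,i} > 4 N_i h_i - 1) \leq \exp(-2 N_i h_i / 3)$. Bounding $\sum_x p(x)$ by $1$ on each piece yields
\[
p_\cA(I_i)\,\bigl|\EE[\bar{H}_i] - \EE[\bar{H}_i^*]\bigr| \leq \log k \cdot \bigl(\exp(-N_{i-1} h_i / 4) + \exp(-2 N_i h_i / 3)\bigr).
\]

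Finally, I would plug in the parameters from Algorithm~\ref{alg:genint}. With $\beta = 2\gamma$, a direct calculation gives $N_{i-1} h_i = C_N (\log^{(i-1)} k)^\gamma / \eps$ and $N_i h_i \geq N_{i-1} h_i$, so the first exponential dominates. Summing over $i = 2, \ldots, T$ and using the estimate $\log^{(i-1)} k \geq e^{T-i}$ (the same iterated-log lower bound that produces a geometric series in the proof of Lemma~\ref{lem:unclipbias}) gives
\[
\bigl|\EE[\esti] - \EE[\estis]\bigr| \leq \log k \sum_{j=0}^{T-2} \exp\!\Bigl(-\frac{C_N\, e^{\gamma j}}{4\eps}\Bigr),
\]
a sum dominated by its $j=0$ term. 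The main obstacle is precisely this bottleneck, which corresponds to $i = T$: since $\log^{(T-1)} k$ can be only slightly above $1$, the Chernoff exponent collapses to $\Theta(C_N/\eps)$ while the clipping-magnitude prefactor is $\Theta(\log k)$. Absorbing the $\log k$ factor demands choosing $C_N$ sufficiently large (together with $\beta \geq 16$) and exploiting that $T = \log^* k$ grows very slowly, so that $\log k \cdot \exp(-C_N/(4\eps)) \leq \eps/(6T)$ in the parameter regimes of interest; summing then delivers the required $\eps/3$ bound.
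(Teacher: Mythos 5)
Your decomposition $\EE[\esti]-\EE[\estis]=\sum_i p_\cA(I_i)\bigl(\EE[\bar H_i]-\EE[\bar H_i^*]\bigr)$ matches the paper's, and your per-symbol conditioning and Chernoff-bound mechanics are on the right track. However, there is a genuine gap: you bound the clipping addition uniformly by $\log(1/(4h_i))\leq\log k$, and this crude prefactor cannot be absorbed for the bottleneck interval $i=T$. There you have $h_T=(\log^{(T-1)}k)^\beta/k$ with $\log^{(T-1)}k\in(1,e]$, so both $N_{T-1}h_T$ and $N_Th_T$ are $\Theta(C_N/\eps)$ --- a quantity \emph{independent of} $k$. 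Your term for $i=T$ is therefore $\Theta(\log k)\cdot\exp(-\Theta(C_N/\eps))$, and forcing this below $\eps/T$ would require $C_N\gtrsim\eps\log\log k$, which is not a constant. You acknowledge the obstruction, but ``choosing $C_N$ sufficiently large'' does not fix it; no universal constant works.

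The paper's proof avoids this by a second layer of decomposition you are missing. Instead of bounding the clipping magnitude once, it partitions the clipping region $N_{X,i}\in(4N_ih_i-1,\,N_i]$ into $i-1$ sub-ranges $L_t$ keyed to the cut points $h_2,\ldots,h_i$, so that on $L_t$ the clipping addition is only $\log(1/(4h_i))-\log(1/(4h_t))\leq\beta\log^{(t)}k$, not $\log k$. Simultaneously, the Chernoff probability of landing in $L_t$ (combined with $p_\cA(I_i\mid x)$) decays like $\exp(-\Theta(N_ih_{t+1}))$, which, after $e^{-x}\leq 1/x$, yields $O(\eps/(\log^{(t)}k)^{\gamma})$. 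The two factors together give a summand $O(\beta\eps/(\log^{(t)}k)^{\gamma-1})$, and the geometric decay of $\log^{(t)}k$ makes both the inner sum over $t$ and the outer sum over $i$ converge to $O(\eps)$ with a universal constant $C_N$. That fine-grained pairing of ``large clipping magnitude $\Leftrightarrow$ exponentially small probability'' is the missing idea; without it the $i=T$ term cannot be controlled.
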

\begin{proof}
	
As before, $\bar{H}_i^*$ is the unclipped version of $\bar{H}_i$.  Hence we have

\begin{align}
	\left \lvert \expectation{\estis} - \expectation{\esti} \right \rvert & = \left \lvert \sum_{i = 1}^{T} p_{\cA}(I_i) \expectation{\bar{H}_i} - \sum_{i = 1}^{T} p_{\cA}(I_i) \expectation{\bar{H}_i^*}  \right \rvert  \nonumber \\
	& = \left \lvert \sum_{i = 1}^{T} p_{\cA}(I_i) (\expectation{\bar{H}_i} - \expectation{\bar{H}_i^*}) \right \rvert \le \sum_{i = 1}^{T} p_{\cA}(I_i)  \left \lvert  \expectation{\bar{H}_i} - \expectation{\bar{H}_i^*} \right \rvert. \label{eqn:sum_1}
\end{align}

Let's first bound each term $p_{\cA}(I_i)   \left \lvert \expectation{\bar{H}_i} - \expectation{\bar{H}_i^*}\right \rvert$ separately. Let $E_{X,i}^* = \log \left(\dfrac{\bSmp_i}{N_{X,i} + 1} \right) $ be the unclipped version of $E_{X,i}$ during each round when we are trying to estimate $H_i$. As we can see from the algorithm, $E_{X,i}^*$'s are independent and $\bar{H}_i^*$ is the empirical average of $E_{X,i}^*$'s in the same batch. Therefore,
\[
	\expectation{\bar{H}_i^*} = \expectation{E_{X,i}^*}.
\]
Similarly,
\[
	\expectation{\bar{H}_i} = \expectation{E_{X,i}}.
\]
Hence we have $p_{\cA}(I_i)  |\expectation{\bar{H}_i} - \expectation{\bar{H}_i^*}| = p_{\cA}(I_i)  |\expectation{E_{X,i}} - \expectation{E_{X,i}^*}|$. When $\dfrac{\bSmp_i}{N_{X,i} + 1} \ge \frac{1}{4h_i}$, we have $E_{X,i} = \max \left\{ \log \left(\dfrac{\bSmp_i}{N_{X,i} + 1} \right), \log \frac{1}{4h_i}\right\}  = E_{X,i}^*$ 

Next, consider the case when $\dfrac{\bSmp_i}{N_{x,i} + 1}  \in [\frac{\bSmp_i}{\bSmp_i+1}, \frac{1}{4h_i})$, which is $N_{X,i} \in (4h_i\bSmp_i-1, \bSmp_i]$. We divide the interval into $i -1$ intervals, which are
 \[
 L_1 = (4\bSmp_ih_2-1, \bSmp_i], L_2 = (4\bSmp_ih_3-1, 4\bSmp_ih_2 -1], ... , L_{i-1} = (4\bSmp_ih_i-1, 4\bSmp_ih_{i-1} -1]
 \]
The corresponding ranges of $\dfrac{\bSmp_i}{N_{X,i} + 1}$ are $ [\frac{\bSmp_i}{\bSmp_i+1}, \frac{1}{4h_2}),  [\frac{1}{4h_2}, \frac{1}{4h_3}), ... , [\frac{1}{4h_{i-1}}, \frac{1}{4h_i})$.

%

Since we are conditioning on $\textsc{GenEstInt}(\bset{i},X) = I_i$, $X$ here is distributed according to $p_{\cA} (X|I_i)$. Then we can rewrite the difference as:
\begin{align}
	&p_{\cA}(I_i)  |\expectation{E_{X,i}} - \expectation{E_{X,i}^*}| =  p_{\cA}(I_i) \expectation {E_{X,i} - E_{X,i}^*} \nonumber \\
 = &  p_{\cA}(I_i) \sum_{t = 1}^{i-1} \sum_{s \in L_t} \sum_{x} \probof{N_{X,i} = s |  X = x} p_{\cA} (x|I_i)(\log \frac{1}{4h_i} - \log \dfrac{\bSmp_i}{s + 1})  \nonumber  \\
 \le & \sum_{t = 1}^{i-1}  p_{\cA}(I_i) \sum_{s \in L_t} \sum_{x} \probof{N_{X,i} = s | X = x} \frac{\p(x) p_{\cA} (I_i|x)}{p_{\cA}(I_i)} (\log \frac{1}{4h_i} - \log \dfrac{1}{4h_{t}})  \nonumber  \\
 \le & \sum_{t = 1}^{i-1} p_{\cA}(I_i) \sum_{x}  \probof{N_{X,i} \in L_t | X = x} \frac{\p(x) p_{\cA} (I_i|x) }{p_{\cA}(I_i) } \beta \logjabk{t} \nonumber \\
 = & \sum_{t = 1}^{i-1} \sum_{x}  \probof{N_{X,i} \in L_t| X = x} p_{\cA} (I_i|x) \p(x)\beta \logjabk{t}.  \label{eqn:sum_2}
\end{align}

By Chernoff bound, we can get if $p(x)< 2h_{t+1}$,
\[
	\probof{N_{X,i} \in L_t | X= x} \le \exp \Paren{-\frac{\bSmp_i h_{t+1}}{3}}.
\]

If $p(x) > 2h_{t+1}$,

\[
		p_{\cA} (I_i|x) \le \exp \Paren{-\frac{\bSmp_i h_{t+1}}{6}}.
\]

Hence 
\[
	\probof{N_{X,i} \in L_t| X = x}  p_{\cA} (I_i|x) \le \max \{ \probof{N_{X,i} \in L_t | X= x}, p_{\cA} (I_i|x)  \} \le\exp (-\frac{\bSmp_i h_{t+1}}{6})
\]

Recall that $\bSmp_i = C_N\cdot\frac{\absz}{\eps(\logjabk{i})^{\gamma}}, h_i = \frac{(\logjabk{i-1})^{\beta}}{\absz}$. Plugging in we get 


\begin{align}
\probof{N_{X,i} \in L_t| X = x} p_{\cA} (I_i|x) \le \exp (-\frac{C_N  \logjabk{t}^{\beta}}{6\eps \logjabk{i}^{
		\gamma}})  \le \frac{6\eps \logjabk{i}^{
		\gamma}}{C_N  \logjabk{t}^{\beta}} \le \frac{6\eps }{C_N  \logjabk{t}^{\gamma}}.
\end{align}

Plugging it into Equation~\eqref{eqn:sum_2},
\begin{align*}
	p_{\cA}(I_i)  |\expectation{E_{x,i}} - \expectation{E_{x,i}^*}|  \le \sum_{t =1}^{i - 1} \frac{6\eps }{C_N  \logjabk{t}^{\gamma}} \beta  \logjabk{t} = \sum_{t =1}^{i - 1} \frac{6\beta \eps }{C_N  \logjabk{t}^{\gamma-1}}. 
\end{align*}

By~\eqref{eqn:log_bound}, 
\[
	p_{\cA}(I_i)  |\expectation{E_{x,i}} - \expectation{E_{x,i}^*}| \le \frac{6\beta \eps }{C_N} \sum_{t =1}^{i - 1} e^{t+1-T} \le \frac{18\beta \eps }{C_N} e^{i-T}
\]

Plugging this into~\eqref{eqn:sum_1}, and summing over $T$ intervals, we get:
\begin{align*}
	\left \lvert \expectation{\esti} -  \expectation{\estis}\right \rvert \le \frac{18\beta \eps }{C_N} \sum_{i =1}^{T}  e^{i-T} \le \frac{36\beta \eps }{C_N}.
\end{align*}

Hence,~\eqref{eqn:lemma_clip} is true with $C_N > 108\beta$.
%
%
%
%
\end{proof}
\subsection{Concentration Bound} \label{sec:var}
In this section, we will derive a high probability bound on $\left \lvert \esti - \Expc{\esti} \right \rvert$. In particular, we will prove the following lemma:
\begin{lemma}(\textbf{Concentration bound}) \label{lem:genvar}
	Let $\esti$ be the estimate given by Algorithm~\ref{alg:genint} , then
	\begin{equation}
	\Pr\left( \left \lvert \esti - \Expc{\esti} \right \rvert  > \frac{\eps}{3} \right) \leq \frac{1}{3}.
	\end{equation}
\end{lemma}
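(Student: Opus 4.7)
My plan is to mirror the decomposition and concentration strategy used in the two-interval case (Lemma~\ref{lem:concentration}), but applied to $T = \log^* k$ intervals, and then close out via a union bound over the $O(T)$ resulting terms. Write $q_i := p_\cA(I_i)$ and $\widehat{q}_i := \widehat{p}_\cA(I_i)$, so that $\esti = \sum_{i=1}^T \widehat{q}_i \bar{H}_i$ and, since \textsc{GenEstProbInt} and \textsc{GenCondExp} use independent sample batches, $\mathbb{E}[\esti] = \sum_{i=1}^T q_i \mathbb{E}[\bar{H}_i]$. Using that $\sum_i (\widehat{q}_i - q_i) = 0$, I would rewrite
\begin{equation*}
\esti - \mathbb{E}[\esti] = \sum_{i=1}^{T} q_i \bigl(\bar{H}_i - \mathbb{E}[\bar{H}_i]\bigr) + \sum_{i=1}^{T-1} (\widehat{q}_i - q_i)\bigl(\bar{H}_i - \bar{H}_T\bigr),
\end{equation*}
and apply the triangle inequality to reduce the goal to controlling $T$ Type~I terms $q_i |\bar{H}_i - \mathbb{E}[\bar{H}_i]|$ and $T-1$ Type~II terms $|\widehat{q}_i - q_i|\cdot|\bar{H}_i - \bar{H}_T|$.

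\textbf{Ranges from clipping.} The clipping in Step~\ref{clipping_gen} forces every summand in $\widehat{H}_i$ to lie in $[a_i, b_i] := [\log(1/(4h_i)),\, \log \bSmp_i]$. Substituting the parameter choices $h_i = (\logjabk{i-1})^\beta / k$ (with the convention $h_T = e^\beta/k$ for the smallest interval, since $\logjabk{T-1} \leq e$) and $\bSmp_i = C_N k/(\eps (\logjabk{i})^\gamma)$, a routine calculation gives $b_i - a_i = \log(4 h_i \bSmp_i) = O(\log(\logjabk{i-1}/\eps))$ for every $i$ (taking $\logjabk{0} := k$). Consequently $\bar{H}_i \in [a_i, b_i]$, and a similar calculation shows the absolute difference $|\bar{H}_i - \bar{H}_T| \leq \max(b_i - a_T,\, b_T - a_i) = O(\log(\logjabk{i-1}/\eps))$; crucially, this is the same order as $b_i - a_i$, so the clipping buys us the same width savings for both term types.

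\textbf{Concentration of each term.} For a Type~I term, conditional on the randomness of \textsc{GenEstProbInt} (which determines $q_i$'s target but not the estimates), $\bar{H}_i$ is the mean of $S_i \sim \mathrm{Bin}(R_i, q_i)$ i.i.d.\ random variables in $[a_i, b_i]$. Lemma~\ref{lem:ranhoeff} with $m = R_i$, $p = q_i$, and $t = \eps/(9T)$ then gives
\begin{equation*}
\Pr\!\Bigl(q_i|\bar{H}_i - \mathbb{E}[\bar{H}_i]| \geq \tfrac{\eps}{9T}\Bigr) \leq 3\exp\!\left(-\tfrac{R_i \eps^2}{648\,T^2\,q_i\,(b_i - a_i)^2}\right),
\end{equation*}
which, using $q_i \leq 1$ and $R_i = C_R \cdot (\log(\logjabk{i-1}/\eps))^2/\eps^2$, collapses to $3\exp(-\Theta(C_R/T^2))$, and is thus at most $1/(9T)$ for $C_R$ a sufficiently large (absolute) constant, recalling that $T = \log^* k$ is slower than any iterated logarithm. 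For a Type~II term, because $\widehat{q}_i$ is a mean of $R_i$ i.i.d.\ Bernoulli$(q_i)$ variables independent of the $\bar{H}$-batch, ordinary Hoeffding (Lemma~\ref{lem:hoeff}) applied with deviation $\eps/(9T \cdot W_i)$, where $W_i = O(\log(\logjabk{i-1}/\eps))$ is the worst-case value of $|\bar{H}_i - \bar{H}_T|$, yields the identical exponent $\exp(-\Theta(C_R/T^2)) \leq 1/(9T)$, conditioning on which gives $|\widehat{q}_i - q_i|\cdot|\bar{H}_i - \bar{H}_T| \leq \eps/(9T)$.

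\textbf{Union bound and anticipated difficulty.} A union bound over the $2T - 1 \leq 2T$ events shows that, with probability at least $1 - 2T/(9T) > 2/3$, every individual term is at most $\eps/(9T)$, and therefore $|\esti - \mathbb{E}[\esti]| \leq (2T-1)\cdot \eps/(9T) \leq \eps/3$. The main obstacle I expect is not the probabilistic argument but the bookkeeping: one must verify that the iterated-logarithm width $b_i - a_i = O(\log(\logjabk{i-1}/\eps))$ precisely matches the choice $R_i = \Theta((\log(\logjabk{i-1}/\eps))^2/\eps^2)$ so that a single absolute constant $C_R$ suffices across all intervals, and one must also check that $|\bar{H}_i - \bar{H}_T|$ does not inherit a worse range from the large-$i$ endpoint $\bar{H}_T$ (which it does not, because $h_T \asymp 1/k$ is only mildly larger than $h_i$'s scale). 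Handling the degenerate case $S_i = 0$, which falls inside Lemma~\ref{lem:ranhoeff}'s statement via the geometric-style bound it absorbs, is also worth flagging but does not change the asymptotics.
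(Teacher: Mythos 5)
Your overall decomposition (Type~I vs.\ Type~II terms, using $\sum_i(\widehat{q}_i - q_i) = 0$ to pivot against $\bar{H}_T$) matches the paper exactly, and your range computation $b_i - a_i = O(\log(\logjabk{i-1}/\eps))$ and $|\bar{H}_i - \bar{H}_T| = O(\log(\logjabk{i-1}/\eps))$ is also correct. The gap is in your union-bound allocation: you distribute the error budget and failure budget \emph{uniformly}, giving each of the $2T-1$ events deviation $\eps/(9T)$ and failure probability $1/(9T)$. With your choice $R_i = C_R(\log(\logjabk{i-1}/\eps))^2/\eps^2$, your Random-Hoeffding exponent is $\Theta(C_R/T^2)$, and you need $3\exp(-\Theta(C_R/T^2)) \leq 1/(9T)$, i.e.\ $C_R \geq \Theta(T^2\log T)$. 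But $T = \log^* k$, while slowly growing, is \emph{unbounded}; there is no single absolute constant $C_R$ making this hold for all $k$. As written, your argument would require $R_i$ to carry an extra $(\log^* k)^2\log\log^* k$ factor, which would degrade the advertised sample complexity $O(k(\log(1/\eps))^2/\eps^3)$ to include a $\log^* k$ factor.

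The paper avoids this by using \emph{geometrically decaying} per-interval budgets rather than uniform $1/T$ ones. It picks thresholds $t_i \propto \eps/(\log(4\bSmp_T h_i))^{5/4}$ and $c_i \propto \eps/(\log(4\bSmp_T h_i))^{1/4}$. Because $\logjabk{i} \geq e^{i-1}$, the threshold contributions $t_i|\bar{H}_i - \bar{H}_T| = O(\eps/(\logjabk{i})^{1/4})$ form a convergent geometric-type series summing to at most $\eps/6$ (with no $T$ factor), and simultaneously the per-interval failure probabilities $\exp(-\vSmp_i t_i^2) = O(1/\logjabk{i}) = O(e^{-(i-1)})$ also sum to a fixed constant $\leq 1/6$. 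This Kraft-style weighting is precisely what lets a single absolute constant $C_R$ work and is the step you would need to add. Your proof is correct in structure and in all the constituent concentration applications, but the uniform allocation is not just bookkeeping; it is where the $\log^* k$ factor hides, and the paper's decaying weights are needed to eliminate it.
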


\begin{proof}

\begin{align}
\left \lvert \esti - \Expc{\esti} \right \rvert &= \left \lvert \sum\limits_{i=1}^{T} \widehat{p}_{\cA} \left(I_i \right) \bar{H}_i - \sum\limits_{t=1}^{T} p_{\cA}\left(I_i\right) \Expc{\bar{H}_i} \right \rvert \nonumber \\
&=  \left \lvert \sum\limits_{i=1}^{T} \widehat{p}_{\cA}\left(I_i \right) \bar{H}_i - \sum\limits_{t=1}^{T}p_{\cA} \left(I_i\right)\bar{H}_i + \sum\limits_{t=1}^{T}p_{\cA}\left(I_i\right)\bar{H}_i -  \sum\limits_{t=1}^{T} p_{\cA}\left(I_i\right) \Expc{\bar{H}_i} \right \rvert \nonumber \\
&\leq \sum\limits_{i=1}^{T-1} \left \lvert \widehat{p}_{\cA}\left(I_i\right) - p_{\cA}\left(I_i\right) \right \rvert \left \lvert \bar{H}_i - \bar{H}_T \right \rvert  + \sum\limits_{i=1}^{T} p_{\cA}\left(I_i\right) \left \lvert \bar{H}_i - \Expc{\bar{H}_i} \right \rvert. \label{eq:gen-hpvar}
\end{align}

Next, we will bound each of the term seperately. For the first $T-1$ terms, note that, because of the clipping step, $\bar{H}_i \in \left[ \log \left( \dfrac{1}{4h_i} \right), \log \bSmp_i \right]$. Hence we have:
 \[
 	\left \lvert \bar{H}_i - \bar{H}_T \right \rvert  \le \log(4N_Th_i) \le (\beta+1) \log \frac{\logjabk{i-1}}{\eps}
 \]
The estimation of $\widehat{p}_{\cA}\left(I_i\right)$ requires $\vSmp_i$ independent executions of \textsc{GenEstProbInt}. Therefore, by Hoeffding's inequality (Lemma~\ref{lem:hoeff}), we have
\begin{align*}\label{eq:gen-fir-var}
\Pr\left( \left \lvert \widehat{p}_{\cA}\left(I_i\right) - p_{\cA}\left(I_i\right) \right \rvert > t_i \right) \leq 2\exp\left( -\vSmp_i t_i^2\right).
\end{align*}

Choosing $t_i = \dfrac{\eps}{C_T\Paren{\log\left(4\bSmp_T h_i \right)}^{5/4}}$ ($C_T \ge 30$ and constant) for $i = 1, \dots, T-1$ and the value of $\vSmp_i$ from Algorithm~\ref{alg:genint}, the right hand expression can be bounded by:
\begin{equation}\label{eq:varbd-fir}
	2\exp\left( -\vSmp_i t_i^2\right) \leq 2\exp \left( - \frac{C_R \Paren{\log\Paren{\logjabk{i-1}/\eps}}^{1/2} }{C_T^2(\beta + 1)^{5/2}} \right) \le \frac{2 C_T^4(\beta + 1)^{5}}{C_R^2\logjabk{i}}.
\end{equation}
the last inequality follows from $e^{-x} \le \frac{1}{x^2}$ for $x > 0$. Combing these for all $T-1$ intervals, let
\begin{align*}
	A = \{\sum\limits_{i=1}^{T-1} \left \lvert \widehat{p}_{\cA}\left(I_i\right) - p_{\cA}\left(I_i\right) \right \rvert \left \lvert \bar{H}_i - \bar{H}_T \right \rvert   \ge \sum\limits_{i=1}^{T-1} t_i \left \lvert \bar{H}_i - \bar{H}_T \right \rvert \}.
\end{align*}
Then by union bound and~\eqref{eqn:log_bound},
\begin{align*}
	\probof{A} \le \sum\limits_{i=1}^{T-1} \frac{2C_T^4(\beta + 1)^{5}}{C_R^2\logjabk{i}} \le \frac{2C_T^4(\beta + 1)^{5}}{C_R^2} \sum\limits_{i=1}^{T-1} \frac{1}{e^{i-1}} \le  \frac{4C_T^4(\beta + 1)^{5}}{C_R^2} . 
\end{align*}
Notice that
\begin{align*}
	\sum\limits_{i=1}^{T-1} t_i \left \lvert \bar{H}_i - \bar{H}_T \right \rvert  & \le  \sum\limits_{i=1}^{T-1} \dfrac{\eps}{C_T\Paren{\log\left(4\bSmp_T h_i \right)}^{5/4}}  \log(4N_Th_i) \le  \sum\limits_{i=1}^{T-1}  \frac{\eps }{C_T\Paren{(\beta+1)\logjabk{i}}^{1/4}} \\
	& \le  \sum\limits_{i=1}^{T-1} \frac{\eps }{C_Te^{\frac{i-1}{4}}} \le \frac{5\eps}{C_T} \le \frac{\eps}{6}.
\end{align*}
Hence we have for $C_R \ge 6C_T^2 (\beta + 1)^{5/2}$, we have:
\begin{align*}
	\probof{\sum\limits_{i=1}^{T-1} \left \lvert \widehat{p}_{\cA}\left(I_i\right) - p_{\cA}\left(I_i\right) \right \rvert \left \lvert \bar{H}_i - \bar{H}_T \right \rvert  \ge \frac{\eps}{6} } \le \probof{A}  \le \frac{1}{6}
\end{align*}
For the second term in~\eqref{eq:gen-hpvar}, 
we use Lemma~\ref{lem:ranhoeff} where $p=p_{\cA}\left(I_i\right)$, $m=\vSmp_i$, $b = \log \bSmp_i$, $a = \log \frac{1}{h_i}$ to get 
\begin{align*} \label{eq:gen-sec-var}
\Pr\left( p_{\cA}\left(I_i\right) \left \lvert \bar{H}_i - \Expc{\bar{H}_i}  \right \rvert > c_i \right) &\leq  3\exp\left( \frac{-\vSmp_i c_i^2}{8p_{\cA}\left(I_i\right)\left( \log 4\bSmp_i h_i\right)^2}\right). \\
\end{align*}
Let $c_i = \dfrac{\eps}{C_c\Paren{\log\left(4\bSmp_T h_i \right)}^{1/4}}$ ($C_T \ge 30$ and constant) for $i = 1, \dots, T$. Using similar union bound argument as the first part, 
\begin{align*}
	\probof{\sum\limits_{i=1}^{T} p_{\cA}\left(I_i\right) \left \lvert \bar{H}_i - \Expc{\bar{H}_i} \right \rvert \ge \frac{\eps}{6}} \le \frac{1}{6}.
\end{align*}
Combining the two, we get
\begin{equation*} 
\Pr\left( \left \lvert\esti - \Expc{\esti} \right \rvert  > \frac{\eps}{3} \right) \leq \frac{1}{3}.
\end{equation*}
\end{proof}

\end{document}